\newcommand{\R}{\mathbb{R}}
\newcommand{\N}{\mathbb{N}}
\newcommand{\Q}{\mathbb{Q}}
\newcommand{\Z}{\mathbb{Z}}
\newcommand{\A}{\mathbb{A}}
\newcommand{\infnorm}[1]{\left\lVert{#1}\right\rVert}
\newcommand{\sigmap}[1]{{\Sigma{#1}}}
\newcommand{\degp}[1]{\operatorname{deg}(#1)}
\newcommand{\taylor}[3]{{T_{#1}^{#2}#3}}
\newcommand{\intinterv}[2]{\{#1,\ldots,#2\}}
\newcommand{\bigO}[1]{\mathcal{O}\left(#1\right)}
\newcommand{\softO}[1]{\tilde{\mathcal{O}}\left(#1\right)}
\newcommand{\IntI}{\operatorname{Int}}
\newcommand{\LenI}{\operatorname{Len}}
\newcommand{\poly}{\operatorname{poly}}
\newcommand{\Rsize}[1]{\mathfrak{L}\left(#1\right)}
\newcommand{\lemref}[1]{Lemma~\ref{#1}}
\newcommand{\thref}[1]{Theorem~\ref{#1}}
\newcommand{\secref}[1]{Section~\ref{#1}}
\newcommand{\algoref}[1]{Algorithm~\ref{#1}}
\newcommand{\algassign}{:=}
\newcommand\numberthis{\addtocounter{equation}{1}\tag{\theequation}}
\numberwithin{figure}{section}
\newtheorem{definition}{Definition}
\newtheorem{example}[definition]{Example}
\newtheorem{theorem}[definition]{Theorem}
\newtheorem{lemma}[definition]{Lemma}
\let\c@algorithm\c@definition
\renewcommand{\theequation}{\arabic{equation}}
\algnewcommand\Comments[1]{\item[\textbf{Comments\ifthenelse{\equal{#1}{}}{}{ (#1)}}]}
\algnewcommand\CommentLine[2]{\item[\textbullet\hspace{.5em}\textbf{Line #1:}] #2}
\algnewcommand\Definition[2][]{\item[\textbf{Definition\ifthenelse{\equal{#1}{}}{}{ (#1)}:}] #2}
\journal{Theoretical Computer Science}
\begin{document}

\begin{frontmatter}

\title{Computational complexity of solving polynomial
differential equations over unbounded domains}

\author[fct,lix]{Amaury Pouly\corref{cor1}}
\ead{pamaury@lix.polytechnique.fr}

\author[fct,sqig]{Daniel S. Gra\c{c}a}
\ead{dgraca@ualg.pt}

\cortext[cor1]{Corresponding author.}

\address[lix]{LIX -
1 rue Honoré d'Estienne d'Orves,
Bâtiment Alan Turing,
Campus de l'École Polytechnique,
91120 Palaiseau,
France. Telephone: (+33)177578015}
\address[fct]{FCT da Universidade do Algarve,
Campus de Gambelas,
8005-139 Faro,
Portugal.
Telephone: (+351)289800900 (extension 7663), Fax: (+351)289800066}
\address[sqig]{SQIG/Instituto de Telecomunica\c{c}\~{o}es, Lisbon, Portugal}

\begin{abstract}
In this paper we investigate the computational complexity of solving ordinary
differential equations (ODEs) $y^{\prime}=p(y)$ over \emph{unbounded time domains},
where $p$ is a vector of polynomials.
Contrarily to the bounded (compact) time case, this problem has not been well-studied,
apparently due to the ``intuition'' that it can always be reduced
to the bounded case by using rescaling techniques. However, as we show in this paper,
rescaling techniques do not seem to provide meaningful insights on the complexity of
this problem, since the use of such techniques introduces a dependence on parameters
which are hard to compute.

We present algorithms which numerically solve these ODEs over unbounded time domains.
These algorithms have guaranteed accuracy, i.e.~given some arbitrarily large time $t$ 
and error bound $\varepsilon$ as input,
they will output a value $\tilde{y}$ which satisfies $\|y(t)-\tilde{y}\|\leq\varepsilon$.
We analyze the complexity of these algorithms and show that they compute $\tilde{y}$ in time polynomial
in several quantities including the time $t$, the accuracy of the output $\varepsilon$
and the length of the curve $y$ from $0$ to $t$, assuming it exists until time $t$.
We consider both algebraic complexity and bit complexity.
\end{abstract}

\begin{keyword}
Ordinary differential equations \sep
computation with real numbers \sep computational complexity \sep adaptive Taylor
algorithms
\MSC 03D78 \sep 65L05
\end{keyword}

\end{frontmatter}

\section{Introduction}\label{sec:solve:intro}

The purpose of this paper is to characterize the computational complexity needed to solve a
polynomial initial-value problem (PIVP) defined by
\begin{equation}
\left\{
\begin{array}{@{}r@{}l}
y^{\prime}(t)&=p(y(t)) \\
y(t_0)&=y_0
\end{array}
\right.\label{eq:pivp}
\end{equation}
over an unbounded time domain. Since the system is autonomous, we can assume, without loss of generality, that $t_0=0$.
More precisely, we want to compute $y(t)$ with accuracy $2^{-n}$,
where $t\in\R$, $n\in\N$, and a description of $p$ are given as inputs, and $y$ is the solution of \eqref{eq:pivp}.
We have to assume the existence of $y$ until time $t$ because this problem is
undecidable, even for polynomial ODEs \cite{GBC07}.

\subsection{Why polynomial differential equations?}

In this paper we study the computational complexity of solving initial-value problems (IVPs)
$y^{\prime}=f(t,y), y(t_0)=y_0$, where $f$ is a vector of polynomials,
over (potentially) unbounded domains. The reader may ask:
``why do you restrict $f$ to polynomials when there are several results about
the computational complexity of solving IVPs for the more general case where $f$ is Lipschitz?''.
There are, indeed, several results (see Section \ref{sec:solve:related_work} for some references)
which analyze the computational complexity of solving Lipschitz IVPs in \emph{bounded domains}.
And, in \emph{bounded domains} polynomials are Lipschitz functions (since they are of class $C^1$)
and therefore those above-mentioned results also apply to PIVPs.

However, in this paper we tackle the problem of computing the solutions of IVPs over
\emph{unbounded domains} and in that respect the previous results do not apply,
and no easy technique seems to establish a bridge between the bounded and unbounded case
(some authors informally suggested us that a ``rescalling technique'' could be used,
but this technique does not work, as we will see in Section \ref{sec:rescalling}).
In some sense, the unbounded case is more general than the bounded case:
if you know the complexity needed to solve an IVP over, e.g~$\R$,
then you can easily restrict this general case to give a bound for the complexity needed
to solve the same IVP over, e.g.~ $[0,1]$, but the reverse is not evident.
For this reason, it seems natural that results about the computational complexity
of IVPs over unbounded domains should be harder to get (or at least should not be easier to get)
than similar results for the bounded case.

That's the first reason why we use PIVPs: they are not trivial (polynomials do not
satisfy a Lipschitz condition over an unbounded domain, contrarily to simpler functions like linear functions)
but yet have ``nice'' properties which we can exploit to deal with the
harder case of establishing the computational complexity of solving IVPs over unbounded domains.

The second reason to use PIVPs is that they include a fairly broad class of IVPs,
since any IVP written with the usual functions of Analysis (trigonometric functions,
exponentials, their composition and inverse functions, etc.) can be rewritten as PIVPs,
as shown in \cite{WWSPC06}, \cite{GCB08}.

\subsection{A note on rescaling}\label{sec:rescalling}

It is tempting to think that the unbounded time domain case can be reduced to the bounded time one.
We would like to note that this not the case unless the bounded time case complexity
is studied in terms of \emph{all parameters} which is never the case. Indeed a very simple
example illustrates this problem. Assume that $y:I\rightarrow\R^d$ satisfies the following system:
\[
\left\{\begin{array}{@{}r@{}l}y_1(0)&=1\\y_2(0)&=1\\\ldots&\\ y_n(0)&=1
\end{array}\right.
\qquad
\left\{\begin{array}{@{}r@{}l}y_1'(t)&=y_1(t)\\y_2'(t)&=y_1(t)y_2(t)\\\ldots&\\
y_d'(t)&=y_1(t)\cdots y_n(t)
\end{array}\right.
\]
Results from  the literature (namely \cite{MM93} -- see Section \ref{Sec:PreliminaryResults})
show that for any fixed, compact $I$, $y$ is polynomial time (precision-)computable
(i.e.~for any $t\in I$ we can compute an approximation of $y(t)$ with precision $2^{-n}$
in time polynomial in $n$ -- see e.g.~\cite{BHW08}).
On the other hand, this system can be solved explicitly and yields:
\[y_1(t)=e^t\qquad y_{n+1}(t)=e^{y_n(t)-1}
\qquad y_d(t)=e^{e^{\iddots^{e^{e^t}-1}}-1}\]
One immediately sees that $y_d$ being a tower of exponentials prevents $y$ from being polynomial
time (precision-)computable over $\R$, for any reasonable notion, although $y_d$ (and $y$) is polynomial time (precision-)computable over any \emph{fixed} compact.

This example clearly shows that the solution of an IVP (or even of a PIVP) can be polynomial time computable on any fixed compact, while it may not necessarily be polynomial time computable over $\R$.
In fact this example provides
an even stronger counter-example: the discrepancy between the bounded and unbounded
time domain can be arbitrarily high. Note however that this discrepancy arises
because in the bounded time case, the size of the compact $I$ is not taken as a
parameter of the problem (because it is fixed). Also note that the dimension $d$
of the system is hardly ever taken into account, although it has a huge influence
on the resulting complexity. More precisely, if $I$ is bounded
then the complexity of computing $y(t)$ can be seen to be polynomial in $t$, but more than exponential
in $|I|$ and $d$: this part is usually hidden in the ``big-O'' part of the constants in the function measuring the complexity for the bounded case.

\subsection{Contributions}
 In this paper we give several contributions to the problem of solving the polynomial
initial value problems \eqref{eq:pivp}. The main result of this paper is given by \thref{th:solve_pivp_ex_taylor}.
Namely we present an algorithm which solves \eqref{eq:pivp} and
\begin{itemize}
\item show that our algorithm works even for unbounded time interval $I$,
\item we analyze the complexity of the algorithm with respect to all parameters (including the dimension),
\item we show that the complexity is polynomial-time computable in the accuracy of the output,
\item we show that the complexity is polynomial in the length\footnote{This is not
    exactly the length, more details are given in Definition \ref{def:pivp_len} and Lemma \ref{lem:ineq_I_L}} of the curve $y$ from $0$ to $t$,
\item our algorithm does not need to know this length in advance.
\end{itemize}


However the complexity of our algorithm is exponential in the dimension of the system. This is to be expected
because evaluating the truncated Taylor series is polynomial in the number of derivatives
but exponential in the dimension. Unless some breakthrough is achieved in this area,
it seems unlikely that a polynomial time algorithm in the dimension will ever be found.

Note that we are measuring computational complexity against the length of the solution curve.
We believe this parameter has a natural geometrical interpretation
and suggests that the best we can do to solve \eqref{eq:pivp} is to ``follow''
the curve, and thus the complexity of the algorithm is related to the distance we have traveled,
that is the length of the curve.

Finally, our algorithm does not need to know in advance a bound on the length of the curve:
it can automatically discover it. In this case, the complexity of the algorithm
is not known in advance but we know that the running time is polynomial in this (unknown)
quantity. Finally note that our algorithm has to assume the existence of the solution $y$
up to time $t$ because the existence is undecidable even for PIVPs (although this problem is semi-decidable) \cite{GBC07}.

\subsection{Related work}\label{sec:solve:related_work}

There are many results about the computational complexity of solving ODEs of the form:
\begin{equation}
\left\{
\begin{array}{@{}r@{}l}
y^{\prime}(t)&=f(t,y(t)) \\
y(t_0)&=y_0
\end{array}
\right.\label{eq:ivp}
\end{equation}
However, with very few exceptions,
those results assume that the ODE is solved for $t\in I=[a,b]$, i.e. a compact time domain.
This is a very convenient hypothesis for several reasons. First any open-ball convering of $I$
will admit a finite subcovering. This has been used in \cite{MM93} to show polynomial time computability in some
cases but the existence of this subcovering is not effective.
Second if $I$ is compact then $y$ is bounded and if $f$ is $C^1$ then it is always Lipschitz\footnote{
We recall that a function is Lipschitz over $I$ if $\|f(x)-f(y)\|\leq K\|x-y\|$ for some constant $K$ and $x,y\in I$}
over this bounded set and in this case the proof of the existence-uniqueness theorem for ODEs (the Picard-Lindel\"{o}f theorem) 
provides a method (Picard's iterations) to compute the solution over the compact $I$.

The reason to focus on the Lipschitz condition for $f$ seems
to be fundamental when studying the computational complexity of \eqref{eq:ivp}.
Indeed it is well-known (see e.g.~\cite[Theorem 7.3]{Ko91}) that if $f$ is not Lipschitz,
then the solution of \eqref{eq:ivp} can have arbitrarily high complexity,
even if $f$ is assumed to be polynomial-time (precision-)computable and \eqref{eq:ivp} has a unique solution.
The Lipschitz condition plays an instrumental role in the complexity because it is used to derive
the number of steps of the algorithm, for example in Picard-Lindel\"{o}f theorem
it is used to bound the number of iterations.
It was recently shown \cite{Kaw10}, following an open problem from Ko \cite[Section 7.2]{Ko91},
that if $f$ is Lipschitz and polynomial-time (precision-)computable, then the solution of \eqref{eq:ivp}
over $I=[0,1]$
can still be PSPACE-complete (but it was already known not to be of higher complexity \cite{Ko91}).


This implies that if $f$ is polynomial-time computable and Lipschitz, the complexity of computing $y$ can still be very high (PSPACE-hard) even for $I=[0,1]$. 

On the other
hand, such extreme examples are very particular and artifical, which suggests that
putting a Lipschitz condition on $f$, although convenient from a mathematical point of view,
is a very poor way of bounding the complexity for $y$.
Second, if \eqref{eq:ivp}
is solved over an unbounded time domain, $y$ may be unbounded
which in turns means that $f$ needs to be defined over an unbounded domain. This case is of
course even harder that the bounded case where the (time) domain is e.g.~$I=[0,1]$, but some results still hold. If the function $f$ in \eqref{eq:ivp} is Lipschitz over $\R$,
then the solution of \eqref{eq:ivp}
is computable over $\R$ \cite{Abe70}, \cite{Abe80}, \cite{Ko91}. Of course, requiring a global Lipschitz condition for
$f$ over the real line is a very restrictive condition. Alternatively if an effective bound
for $f$ is known or a very restricted condition on its growth is met, the solution is also computable \cite{Ruo96}.
However these cases exclude most of the interesting initial value problems. If $f$ is only locally Lipschitz then
$y$ is computable assuming it is unique \cite{CG09} but can have arbitrarily high complexity.
A workaround is to require $f$ is to be effectively locally Lipschitz but the exact complexity
in this case has not been studied as far as we are aware \cite{GZB07}.
Third, the problem of computing $I$, or even deciding of $I$ is bounded or not, is undecidable \cite{GZB07}.
This means that even if an effective Lipschitz bound is known for $f$, it is not
at all immediate how it can be used since we cannot even compute $I$ and even less a bound on $y$.

In \cite{BGP12} we have shown that a different kind of restriction on \eqref{eq:ivp}
allows for a finer complexity analysis over unbounded time domain. More precisely,
if $f$ is a polynomial (i.e. $y$ is solution of \eqref{eq:pivp})
then $y(t)$ can be computed in time polynomial in $t$,
in the accuracy of the result, and in $\sigmap{p}\max_{u\in[0,t]}\|y(u)\|^{\degp{p}}$.
More precisely, its running time is polynomial in the product $t\sigmap{p}\max_{u\in[0,t]}\|y(u)\|^{\degp{p}}$
(and the size of the coefficients).
However this result is not satisfactory for several reasons. First, and we insist on this point, it requires some
\emph{a priori knowledge of the ``bounding box'' of $y$}, i.e.~a bound on $\max_{u\in[0,t]}\|y(u)\|$, which is only semicomputable
in general. This means the algorithm in \cite{BGP12} is unusable without some knowledge of the result.
Second, this result is in some sense a worst-case scenario: if $y(t)^{\degp{p}}$ ``spikes'' 
and then becomes really small, then the resulting complexity will be high. 
Here we present an algorithm where $y(t)$ can be computed in time polynomial in $\sigmap{p}\int_{0}^t\max(1,\|y(u)\|)^{\degp{p}}du$,
which is related to the ``length'' of the solution curve (see Definition \ref{def:pivp_len}).
The following examples illustrate that the difference in complexity can be huge.

\begin{example}[Spiking function with fixed-point]
Consider the following system of differential equations, where $M\geqslant0$ is a parameter.
\begin{equation}
\left\{\begin{array}{@{}r@{}l}y(0)&=0\\z(0)&=1\end{array}\right.\qquad
\left\{\begin{array}{@{}r@{}l}y'(t)&=Mz(t)-y(t)\\z'(0)&=-z(t)\end{array}\right.
\end{equation}
It can be seen that the solution is given by $y(t)=Mte^{-t}$ which has a maximum of $Me^{-1}$
which is reached at $t=1$ and then quickly converges to $0$. Also note that the size of
the coefficients used to describe the system is of the order of $\log M$ and the degree
of the polynomial is $1$. Let us compare the asymptotic complexity of the algorithms to compute
$y(t)$ with one bit of precision. Note that both algorithms have a polynomial
dependency in $\log\sigmap{p}=\log M$.
\begin{itemize}
\item The algorithm in \cite{BGP12} has polynomial time in $t\sigmap{p}\max_{u\in[0,t]}\|y(u)\|\approx M^2t$.
\item The new algorithm has polynomial time in $\sigmap{p}\int_{0}^t\max(1,\|y(u)\|)du\approx M^2+Mt$.
\end{itemize}
It is clear that for large values of $M$ or $t$, the pratical difference in complexity
will be important because we went from a quadratic one to a linear one.
Of course this comparison should bit taken with a grain of salt:
it depends on the exact exponent of the polynomial involved\footnote{Although in this case
the two algorithms are close enough so that the comparison is relevant.}. On the other hand,
it illustrates an interesting phenomenon: the new algorithm has a linear dependency in $t$
only because of the ``$\max$'' in the integral. This means that if one could get rid
of the maximum so that the algorithm depends on the actual length of the curve, its
running time would not depend on $t$ in this example. This would make a lot of sense because
the system converges to a fixed-point so only the precision really matters for big enough $t$.
\end{example}

Finally, this problem has been widely studied in Numerical Analysis but the point of view
is usually different and more focused on practically fast algorithms rather than
asymptotically efficient algorithms. Some work \cite{IlieSC08,Corless02,Wers79,Smi1}
suggests that any polynomial time algorithm must have variable
order,
and that adaptive algorithms are theoretically superior to non-adaptive
ones\footnote{This is in contrast with classical results which state the contrary but
under usually unrealistic hypothesis, see \cite{Corless02}}. While adaptive algorithms have been widely studied, high-order
algorithms are not mainstream because they are expensive in practice. Variable order
methods have been used in \cite{CC82}, \cite{JZ05}, \cite{Smi1}, \cite{BRAB11}, \cite{ABBR12}
and some polynomial time algorithms have been obtained over compact domains or over
arbitrary domains but with stronger hypothesis.

All these results suggest that we are in fact still lacking the proper parameter against which
the complexity of $y$ should be measured.

\section{Preliminaries}\label{Sec:PreliminaryResults}

\subsection{Notations and basic facts}\label{sec:not_basics}

In this paper $\N,\Z,\Q,\R$ denote respectively the set of natural numbers, integers,
rational numbers and real numbers. We will make heavy use of the infinite norm over $\R^d$ defined
as follows:
\[\infnorm{x}=\max_{1\leqslant i\leqslant d}|x_i|\qquad x\in\R^d\]
We use a shortcut notation for a Taylor approximation:
\[\taylor{a}{n}{f}(t)=\sum_{k=0}^{n-1}\frac{f^{(k)}(a)}{k!}(t-a)^k\]
For any ring $\A$, we denote by $\A[\R^n]$ the set of multivariate polynomial
functions with $n$ variables and coefficients in $\A$. Note that $\A^d[\R^n]$
and $\A[\R^n]^d$ are isomorphic and denote the set of multivariate polynomial vectors.

We will frequently need to express complexity measures in terms of the ``size'' of
a rational number, which we define as follows:
\[\Rsize{\tfrac{p}{q}}=\Rsize{p}+\Rsize{q}\qquad\Rsize{p}=\max(1,\log p)\qquad p,q\in\Z\]

\noindent We will consider the following initial-value problem\footnote{Note that an ODE of the type $y^{\prime}=f(t,y)$
can always be reduced to an ODE $y^{\prime}=g(y)$ without (explicit) dependence on $t$ by replacing $t$
by a new variable $y_{n+1}$ defined by $y_{n+1}^{\prime}=1$.}:
\begin{equation*}\left\{\begin{array}{@{}c@{}l}y^{\prime}(t)&=p(y(t))\\y(t_0)&=y_0\end{array}\right.\end{equation*}
where $p\in\R^d[\R^n]$ is a vector of multivariate polynomials with real coefficients.
If $p\in\A[\R^n]$ is a polynomial, we write:
\[p(x_1,\ldots,x_n)=\sum_{|\alpha|\leqslant k}a_\alpha x^\alpha
\qquad \sigmap{p}=\sum_{|\alpha|\leqslant k}|a_\alpha|\]
where $k=\degp{p_i}$ is the degree of $p_i$ and $|\alpha|=\alpha_1+\cdots+\alpha_j$ as usual.
If $p\in\A^d[\R^n]$ is a vector of polynomials,
we write $\degp{p}=\max(\degp{p_1},\ldots,\degp{p_d})$ and $\sigmap{p}=\max(\sigmap{p_1},\ldots,\sigmap{p_d})$.

If $y$ is the solution of \eqref{eq:pivp}, we write $y=\Phi_p(t_0,y_0)$
so in particular $y(t)=\Phi_p(t_0,y_0)(t)$. When it is not ambiguous, we do not mention $p$.
Note that since \eqref{eq:pivp} is autonomous, $\Phi_p(t_0,y_0)(t)=\Phi_p(0,y_0)(t-t_0)$.

Finally we recall the following well-known result about arithmetico-geometric sequences.

\begin{lemma}[Arithmetico-geometric sequence]\label{lem:rec_seq_geom_arith}
Let $(a_k)_k,(b_k)_k\in\R^\N$ and assume that $u\in\R^\N$ satisfies $u_{n+1}= a_nu_n+b_n$ for all $n\in\N$.
Then:
\[u_n=u_0\prod_{i=0}^{n-1}a_i+\sum_{i=0}^{n-1}b_i\prod_{j=i+1}^{n-1}a_j\]
\end{lemma}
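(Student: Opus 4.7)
The plan is a straightforward induction on $n$, since the claim is a closed form for the solution of a first-order linear recurrence. The formula is easy to guess by unrolling the recurrence one or two steps, and once guessed, the verification by induction is mechanical. There is no genuine obstacle here; the only care required is in handling the empty product and empty sum that appear in the base case.

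First I would check the base case $n = 0$. The right-hand side of the claimed identity reduces to $u_0 \cdot \prod_{i=0}^{-1} a_i + \sum_{i=0}^{-1} b_i \prod_{j=i+1}^{-1} a_j$, which by the standard conventions (empty product equals $1$, empty sum equals $0$) evaluates to $u_0$. This matches trivially.

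For the inductive step, assume the formula holds at index $n$ and compute
\[
u_{n+1} = a_n u_n + b_n
= a_n \left( u_0 \prod_{i=0}^{n-1} a_i + \sum_{i=0}^{n-1} b_i \prod_{j=i+1}^{n-1} a_j \right) + b_n.
\]
Distributing $a_n$ across both terms turns $a_n \prod_{i=0}^{n-1} a_i$ into $\prod_{i=0}^{n} a_i$, and turns each inner product $a_n \prod_{j=i+1}^{n-1} a_j$ into $\prod_{j=i+1}^{n} a_j$. It then remains to absorb the trailing $b_n$ into the sum as the $i = n$ term, using the empty-product convention $\prod_{j=n+1}^{n} a_j = 1$. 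This produces exactly the claimed formula with $n$ replaced by $n+1$, completing the induction.

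The only subtle point, and the one I would write out explicitly, is the bookkeeping with empty products/sums at the boundaries of the index ranges, since the cosmetic appearance of the formula hides that both $i = 0$ with $n = 0$ and $j = i+1$ with $j$-range upper limit $n-1$ require these conventions to match. Apart from that, the proof is a one-line induction and I would keep it as short as a few lines.
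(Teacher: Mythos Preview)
Your induction argument is correct and is exactly the standard proof of this elementary identity. The paper itself gives no proof at all---it simply states the lemma as a well-known result---so there is nothing further to compare.
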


\subsection{Complexity Model}\label{sec:complexity_model}

In this work, $\mathcal{O}$ denotes the usual ``big-O'' relation and $\tilde{\mathcal{O}}$ the
``soft-O'' relation. In this paper, we use three different notions of complexity which we describe below.

The \emph{algebraic (or arithmetic) complexity} (denoted as $C_{arith}$)
is the number of basic operations (addition, multiplication, division and comparison) in the base field $\R$,
using rational constants.

The \emph{rational bit complexity} (denoted as $C_{\Q,bit}$) is the classical complexity of Turing machines
with rational inputs\footnote{Recall that rational inputs
can be encoded as a pair of integers for example and the exact encoding does influence
the size of the encoding up to a polynomial.}. The size of a rational number $x$
will be denoted by $\Rsize{x}$, see previous section for the exact definition.


\subsection{Dependency in the parameters}\label{sec:ode_dependency}

We recall some useful results about the Lipschitz bounds for polynomials
and the dependency of PIVP in the initial value.

\begin{lemma}[Effective Lipschitz bound for polynomials \cite{BGP12}]\label{lem:multivariate_poly_lipschitz}
Let $p\in\R^n[\R^d]$ and $k=\degp{p}$ its degree. For all $a,b\in\R^d$ such that $\infnorm{a},\infnorm{b}\leqslant M$,
$\infnorm{p(b)-p(a)}\leqslant kM^{k-1}\sigmap{p}\infnorm{b-a}$.
\end{lemma}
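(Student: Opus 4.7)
The plan is to reduce the vector case to the scalar case and then to the monomial case. Since $\infnorm{p(b)-p(a)} = \max_i |p_i(b)-p_i(a)|$ and $\sigmap{p_i}\leq\sigmap{p}$, $\degp{p_i}\leq k$, it suffices to prove the inequality for a single polynomial $p_i \in \R[\R^d]$. Expanding $p_i(x) = \sum_{|\alpha|\leq k} a_\alpha x^\alpha$ and applying the triangle inequality, the task reduces to bounding $|b^\alpha - a^\alpha|$ for an individual multi-index $\alpha$.

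For a single monomial $x^\alpha = x_1^{\alpha_1}\cdots x_d^{\alpha_d}$, I would use a standard telescoping decomposition along the coordinates:
\[
b^\alpha - a^\alpha = \sum_{j=1}^d b_1^{\alpha_1}\cdots b_{j-1}^{\alpha_{j-1}}\bigl(b_j^{\alpha_j} - a_j^{\alpha_j}\bigr) a_{j+1}^{\alpha_{j+1}}\cdots a_d^{\alpha_d}.
\]
Each one-variable difference factors as $b_j^{\alpha_j} - a_j^{\alpha_j} = (b_j-a_j)\sum_{i=0}^{\alpha_j-1} b_j^i a_j^{\alpha_j-1-i}$, whose absolute value is at most $\alpha_j M^{\alpha_j-1}|b_j-a_j|$ (with the convention $0\cdot M^{-1}=0$ when $\alpha_j=0$). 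Combining with $|b_\ell|,|a_\ell|\leq M$ in the other factors gives, after telescoping,
\[
|b^\alpha - a^\alpha| \leq \sum_{j=1}^d \alpha_j\, M^{|\alpha|-1}|b_j-a_j| \leq |\alpha|\, M^{|\alpha|-1}\infnorm{b-a}.
\]

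Putting these monomial bounds back together,
\[
|p_i(b)-p_i(a)| \leq \sum_{|\alpha|\leq k}|a_\alpha|\,|\alpha|\,M^{|\alpha|-1}\infnorm{b-a} \leq k\,M^{k-1}\sigmap{p_i}\infnorm{b-a},
\]
where the last step uses $|\alpha|\leq k$ and the fact that one may assume $M\geq 1$ without loss of generality (otherwise replace $M$ by $\max(M,1)$, which only weakens the hypothesis and makes $M^{|\alpha|-1}\leq M^{k-1}$ valid for every $\alpha$). Taking the maximum over $i$ yields the claimed bound. The only delicate point is the monomial estimate: getting the exponent $|\alpha|-1$ rather than $|\alpha|$ is what allows the uniform factor $M^{k-1}$ to appear, and the telescoping decomposition above is precisely the device that produces this saving.
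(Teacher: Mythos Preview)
The paper does not give its own proof of this lemma; it simply cites \cite{BGP12}. Your telescoping argument on monomials is the standard one and is correct, so there is nothing to compare.

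One caveat: your ``without loss of generality $M\geq 1$'' step is not a valid reduction. Replacing $M$ by $\max(M,1)$ does weaken the hypothesis, but it also changes the right-hand side of the conclusion to $k\max(M,1)^{k-1}\sigmap{p}\infnorm{b-a}$, which is not the inequality stated. In fact the lemma as literally written is false for $M<1$: take $d=1$, $p(x)=x+x^2$ (so $k=2$, $\sigmap{p}=2$), $a=0$, $b=M=\tfrac{1}{10}$; then $|p(b)-p(a)|=0.11$ while $kM^{k-1}\sigmap{p}|b-a|=0.04$. So what you have really shown is the bound with $M^{k-1}$ replaced by $\max(1,M)^{k-1}$, which is the correct statement and is exactly how the lemma is used throughout the paper (all the relevant bounds involve $\max(1,\infnorm{\cdot})$). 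Just state the assumption $M\geq 1$ explicitly rather than claiming it is without loss of generality.
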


\begin{theorem}[Initial value dependency of PIVP]\label{th:dependency_right_side}
Let $I$ be an interval, $p\in\R^n[\R^{n}]$, $k=\degp{p}$ and $y_0,z_0\in\R^d$.
Assume that $y,z:I\rightarrow\R^d$ satisfy:
\[\left\{\begin{array}{@{}r@{}l}y(a)&=y_0\\y'(t)&=p(y(t))\end{array}\right.\qquad
\left\{\begin{array}{@{}r@{}l}z(a)&=z_0\\z'(t)&=p(z(t))\end{array}\right.\qquad t\in I\]
Define for all $t\in I$:
\[
\mu(t)=\infnorm{z_0-y_0}
\exp\left(k\sigmap{p}\int_{a}^t(\varepsilon+\infnorm{y(u)})^{k-1}du\right)
\]
If $\mu(t)<\varepsilon$ for all $t\in I$, then $\infnorm{z(t)-y(t)}\leqslant\mu(t)$ for all $t\in I$.
\end{theorem}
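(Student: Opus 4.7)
The plan is to combine an integral-equation version of the ODE with Gr\"{o}nwall's lemma, using \lemref{lem:multivariate_poly_lipschitz} to provide a pointwise Lipschitz constant for $p$ whose dependence on $u$ will match the integrand inside $\mu$. The one subtlety is that the Lipschitz bound requires an a priori $L^\infty$-control on \emph{both} $y(u)$ and $z(u)$, whereas we only know it for $y$. We obtain the missing bound on $z$ from the triangle inequality $\infnorm{z(u)}\leqslant\infnorm{y(u)}+\delta(u)$ where $\delta(u)=\infnorm{z(u)-y(u)}$, but this is circular unless we first argue that $\delta(u)<\varepsilon$ throughout $I$. This is the main technical point and I expect it to be the only real obstacle.

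First I would write $z(t)-y(t)=(z_0-y_0)+\int_a^t\bigl(p(z(u))-p(y(u))\bigr)du$, take infinity norms, and set $\delta(t)=\infnorm{z(t)-y(t)}$. Next I would perform the bootstrap in a single continuity argument: since $\delta$ is continuous and $\delta(a)=\infnorm{z_0-y_0}\leqslant\mu(a)<\varepsilon$, either $\delta(t)<\varepsilon$ for all $t\in I$ with $t\geqslant a$, or there is a smallest $t_0>a$ in $I$ with $\delta(t_0)=\varepsilon$. In the second case, on $[a,t_0]$ we have $\delta(u)\leqslant\varepsilon$, hence $\infnorm{z(u)}\leqslant\infnorm{y(u)}+\varepsilon$ and of course $\infnorm{y(u)}\leqslant\infnorm{y(u)}+\varepsilon$, so \lemref{lem:multivariate_poly_lipschitz} applied with $M=\infnorm{y(u)}+\varepsilon$ yields
\[\infnorm{p(z(u))-p(y(u))}\leqslant k\sigmap{p}(\varepsilon+\infnorm{y(u)})^{k-1}\delta(u).\]

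Plugging this into the integral identity gives, for $t\in[a,t_0]$,
\[\delta(t)\leqslant\infnorm{z_0-y_0}+\int_a^t k\sigmap{p}(\varepsilon+\infnorm{y(u)})^{k-1}\delta(u)\,du.\]
Applying the (integral) Gr\"{o}nwall inequality yields $\delta(t)\leqslant\mu(t)$ on $[a,t_0]$. In particular $\delta(t_0)\leqslant\mu(t_0)<\varepsilon$ by hypothesis, contradicting the assumption $\delta(t_0)=\varepsilon$. Hence the second case cannot occur, and $\delta(t)<\varepsilon$ on all of $I\cap[a,\infty)$; re-running the same Gr\"{o}nwall step on the whole interval then establishes $\delta(t)\leqslant\mu(t)$ there. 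The case $t\leqslant a$ is handled symmetrically (reverse time, or note that the integral in $\mu$ can be interpreted with the appropriate sign/absolute value on the length of the interval of integration).

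If one prefers to avoid invoking Gr\"{o}nwall as a black box, the same conclusion follows by differentiating $e^{-F(t)}\bigl(\infnorm{z_0-y_0}+\int_a^t k\sigmap{p}(\varepsilon+\infnorm{y(u)})^{k-1}\delta(u)\,du\bigr)$ where $F(t)=k\sigmap{p}\int_a^t(\varepsilon+\infnorm{y(u)})^{k-1}du$, but this is a routine calculation I would not expand in detail. The only nontrivial ingredient is the bootstrap-via-strict-inequality argument described above, and it works precisely because the hypothesis $\mu(t)<\varepsilon$ is strict, leaving room to close the continuity argument.
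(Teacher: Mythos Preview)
Your proposal is correct and follows the standard route: integral reformulation, the Lipschitz estimate of \lemref{lem:multivariate_poly_lipschitz} with the pointwise bound $M=\varepsilon+\infnorm{y(u)}$, and Gr\"{o}nwall, closed by the bootstrap/continuity argument that the strict hypothesis $\mu(t)<\varepsilon$ makes possible. The paper does not give a self-contained proof here but merely points to Proposition~3 of \cite{BGP12}, whose argument is precisely of this type, so your approach matches.
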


\begin{proof}Similar to the proof of Proposition 3 in \cite{BGP12}.
\end{proof}

\subsection{Taylor series of the solutions}\label{sec:ode_taylor}

It is well-known \cite[Section 32.4]{Arn78} that solutions of a PIVP
are analytic so in particular the Taylor series at any point converges.
This yields the natural question of the rate of convergence
of the series, and the complexity of computing the truncated series.

In the case of a function satisfying a polynomial differential equation like
\eqref{eq:pivp}, we can obtain a sharper bound than the one given by the classical
Taylor-Lagrange theorem. These bounds are based on Cauchy majorants of series and
we refer the reader to \cite{BGP12} and \cite{WWSPC06} for the details.

\begin{theorem}[Taylor approximation for PIVP]\label{th:wws06}
If $y$ satisfies \eqref{eq:pivp} for $t_0=0$, $k=\degp{p}\geqslant2$, $\alpha=\max(1,\infnorm{y_0})$,
$M=(k-1)\sigmap{p}\alpha^{k-1}$, $|t|<\frac{1}{M}$ then
\[\infnorm{y(t)-\taylor{0}{n}{y}(t)}\leqslant\frac{\alpha|Mt|^n}{1-|Mt|}\]
\end{theorem}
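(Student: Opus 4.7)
My plan is to use Cauchy's classical method of majorants. Everything reduces to establishing the Taylor-coefficient bound $\infnorm{y^{(m)}(0)}/m! \leqslant \alpha M^m$ for every $m \geqslant 0$; once this is in hand, the theorem follows by summing a geometric tail,
\[
\infnorm{y(t) - \taylor{0}{n}{y}(t)} \leqslant \sum_{m \geqslant n} \frac{\infnorm{y^{(m)}(0)}}{m!} |t|^m \leqslant \alpha \sum_{m \geqslant n} |Mt|^m = \frac{\alpha |Mt|^n}{1 - |Mt|},
\]
which is valid because $|Mt| < 1$ by hypothesis.

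To get the coefficient bound, I would introduce the scalar majorant IVP $u'(t) = \sigmap{p} \cdot u(t)^k$ with $u(0) = \alpha$. This is separable and integrates explicitly to $u(t) = \alpha(1 - Mt)^{-1/(k-1)}$ on $|t| < 1/M$. Expanding $u$ by the generalized binomial series,
\[
\frac{u^{(m)}(0)}{m!} = \alpha M^m \prod_{i=0}^{m-1} \frac{1/(k-1) + i}{i+1}.
\]
Since $k \geqslant 2$ forces $1/(k-1) \in (0,1]$, every factor in this product lies in $(0, 1]$, so $u^{(m)}(0)/m! \leqslant \alpha M^m$.

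The main obstacle, and the step where the real work lies, is the majoration proper: showing inductively that $\infnorm{y^{(m)}(0)}/m! \leqslant u^{(m)}(0)/m!$ for every $m \geqslant 0$. The base case $m = 0$ is $\infnorm{y_0} \leqslant \alpha$, which holds by definition of $\alpha = \max(1,\infnorm{y_0})$. For the induction step I would use $y^{(m+1)}(0) = (p \circ y)^{(m)}(0)$ and expand the right-hand side by Faà di Bruno, or equivalently by iterated formal differentiation of the identity $p(y(t)) = \sum_{|\beta| \leqslant k} a_\beta y_1(t)^{\beta_1} \cdots y_d(t)^{\beta_d}$. The resulting expression is a polynomial, with non-negative universal combinatorial coefficients, in the scalar coefficients $a_\beta$ and in the earlier Taylor coefficients of the components of $y$. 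Replacing each $|a_\beta|$ by the upper bound $\sigmap{p}$, invoking $\alpha \geqslant 1$ to dominate every monomial of degree $|\beta| \leqslant k$ by the single monomial $\sigmap{p} \cdot u^k$, and replacing each component-wise Taylor coefficient by the larger scalar $u^{(j)}(0)/j!$ supplied by the induction hypothesis, yields precisely the expression computing $u^{(m+1)}(0)/(m+1)!$ from the recurrence $u' = \sigmap{p} \cdot u^k$. This closes the induction and, combined with the previous two paragraphs, completes the proof.
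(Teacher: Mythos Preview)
Your argument is correct and is exactly the Cauchy-majorant method the paper invokes (it gives no proof of its own here, referring instead to \cite{BGP12} and \cite{WWSPC06}). One small phrasing correction: rather than ``replacing each $|a_\beta|$ by the upper bound $\sigmap{p}$,'' the actual step is $\sum_\beta |a_\beta|\,[t^m]u^{|\beta|} \leqslant \big(\sum_\beta |a_\beta|\big)\,[t^m]u^{k} = \sigmap{p}\,[t^m]u^{k}$, using $[t^m]u^{j}\leqslant[t^m]u^{k}$ for $j\leqslant k$ (this last inequality is where $\alpha\geqslant 1$ and the non-negativity of the coefficients of $u$ enter).
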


The next problem we face is to compute the truncated Taylor series of
the solution over a small time interval. In this paper,
we will assume that we have access to a subroutine $\operatorname{ComputeTaylor}$
which computes this truncated Taylor series (see Algorithm \ref{alg:taylor}).

\begin{algorithm}
\begin{algorithmic}[1]
\Require{$p\in\R^d[\R^d]$ the polynomial of the PIVP}
\Require{$y_0\in\R^d$ the initial condition}
\Require{$\omega\in\N$ the order of the approximation}
\Require{$\varepsilon\in]0,1]$ the accuracy requested}
\Require{$t\in\R$ the time step}
\Function{ComputeTaylor}{$p,y_0,\omega,\varepsilon,t$}
\State\Return{x}\Comment{such that $\infnorm{x-\taylor{0}{\omega}y(t)}\leqslant\varepsilon$ where $y(0)=y_0$ and $y'=p(y)$}
\EndFunction
\end{algorithmic}
\caption{Taylor Series algorithm for PIVP\label{alg:taylor}}
\end{algorithm}

The complexity of computing this Taylor series has already been analyzed in the literature.
Let $\operatorname{TL}(d,p,y_0,\omega,\varepsilon,t)$ be the complexity of \algoref{alg:taylor}.
More precisely, we will refer to the \emph{bit-complexity} as $\operatorname{TL}_{bit}$
and the \emph{arithmetic complexity} as $\operatorname{TL}_{arith}$.

\begin{theorem}The complexity of \algoref{alg:taylor} is bounded by:
\begin{equation}\label{eq:tl_arith}
\operatorname{TL}_{arith}=\poly(\omega,d,k^d)
\end{equation}
\begin{equation}\label{eq:tl_bit}
\operatorname{TL}_{\Q,bit}=\poly((k\omega)^d,\Rsize{t},\Rsize{\sigmap{p}},\Rsize{\infnorm{y_0}},-\log\varepsilon)
\end{equation}
\end{theorem}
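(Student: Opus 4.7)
The plan is to compute $\taylor{0}{\omega}{y}(t)$ from the standard Taylor-coefficient recursion for PIVPs, then analyze the number of field operations and the bit-size of intermediate quantities separately. Writing the exact solution as $y(t) = \sum_{n \geqslant 0} a_n t^n$ with $a_0 = y_0$, plugging into $y' = p(y)$ and identifying the coefficient of $t^n$ on both sides gives
\[(n+1)\,a_{n+1} = [t^n]\, p\!\left(\sum_{j=0}^{n} a_j t^j\right),\]
so each $a_{n+1}\in\R^d$ is an explicit polynomial function of $a_0,\ldots,a_n$ and of the coefficients of $p$. The algorithm computes $\tilde a_0,\ldots,\tilde a_{\omega-1}$ iteratively and returns $\sum_{n=0}^{\omega-1} \tilde a_n t^n$.

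For the arithmetic complexity, the vector $p\in\R^d[\R^d]$ contains at most $d\binom{k+d}{d} = \bigO{d\cdot k^d}$ monomials in total. For a single monomial $y_1^{\alpha_1}\cdots y_d^{\alpha_d}$ with $|\alpha|\leqslant k$, one obtains its truncated power series (up to order $\omega$) by iterated truncated convolutions of the component series, costing $\bigO{\omega^2(k+d)}$ base-field operations; all $\omega$ coefficients of the monomial are produced at once. Summing over monomials and components yields $\operatorname{TL}_{arith} = \bigO{\omega^2 d^2 k^{d+1}} = \poly(\omega, d, k^d)$.

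For the bit complexity, I would first bound the exact $a_n$ via the Cauchy-majorant argument underlying \thref{th:wws06}: $\infnorm{a_n}\leqslant \alpha M^n$ with $\alpha = \max(1,\infnorm{y_0})$ and $M = (k-1)\sigmap{p}\alpha^{k-1}$, so each $a_n$ is representable by a rational of bit-size $\poly(n,k,\Rsize{\sigmap{p}},\Rsize{\infnorm{y_0}})$. To guarantee $\infnorm{x - \taylor{0}{\omega}{y}(t)}\leqslant\varepsilon$ in the output, it suffices to compute each term $\tilde a_n t^n$ with absolute error $\leqslant\varepsilon/\omega$, which costs $\bigO{-\log\varepsilon + \omega\Rsize{t}}$ additional bits of precision per coefficient. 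Every intermediate dyadic rational therefore has size $\poly(\omega, k, \Rsize{\sigmap{p}}, \Rsize{\infnorm{y_0}}, \Rsize{t}, -\log\varepsilon)$; multiplying by the $\poly(\omega, d, k^d)$ operation count (each rational addition/multiplication executed in soft-linear bit time) yields $\operatorname{TL}_{\Q,bit} = \poly((k\omega)^d, \Rsize{t}, \Rsize{\sigmap{p}}, \Rsize{\infnorm{y_0}}, -\log\varepsilon)$.

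The main obstacle is controlling error propagation in Step 4: each rounded coefficient and each inexact intermediate convolution must not destroy the final error bound, since the recursion feeds errors on $a_0,\ldots,a_n$ back through $p$ when producing $a_{n+1}$. I would handle this by an induction showing that $\infnorm{\tilde a_n - a_n}$ stays bounded by roughly $(kM)^{\bigO{n}}$ times the working precision, applying \lemref{lem:multivariate_poly_lipschitz} on the closed ball of radius $2\alpha M^{\omega}$ to convert errors on the partial sums $\sum_{j\leqslant n}\tilde a_j t^j$ into errors on $p$ of that sum. Setting the working precision to $\poly((k\omega)^d, \Rsize{t}, \Rsize{\sigmap{p}}, \Rsize{\infnorm{y_0}}, -\log\varepsilon)$ bits then absorbs every propagated error and delivers the announced bounds.
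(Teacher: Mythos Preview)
Your proposal is essentially correct but follows a different route from the paper. The paper's proof is almost entirely by citation: for \eqref{eq:tl_arith} it invokes the fast power-series algorithm of Bostan--Chyzak--Ollivier--Salvy--Schost--Sedoglavic (a Newton-iteration based method whose cost is expressed through the matrix-polynomial multiplication function $\texttt{MM}(d,k)$), and for \eqref{eq:tl_bit} it simply points to the naive implementation analyzed in \cite{BGP12}. Your argument, by contrast, is self-contained: you set up the coefficient recursion explicitly, count operations for naive truncated convolutions monomial by monomial, and then carry out a direct bit-size and error-propagation analysis using the majorant bound $\infnorm{a_n}\leqslant\alpha M^n$ from \thref{th:wws06}. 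This buys you independence from the cited literature at the price of worse hidden constants (your $\bigO{\omega^2 d^2 k^{d+1}}$ versus the quasi-linear-in-$\omega$ bound the paper imports), which is immaterial for a $\poly(\cdot)$ statement.

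Two small points are worth tightening. First, your sentence ``all $\omega$ coefficients of the monomial are produced at once'' glosses over the fact that the series of $y$ is itself being built incrementally; to get the $O(\omega^2 k)$ cost per monomial you need the standard incremental update of the partial products $P_j$ as each new $a_{n+1}$ becomes available, rather than a single batch convolution. Second, \lemref{lem:multivariate_poly_lipschitz} as stated applies to evaluation of $p$ at points of $\R^d$, not to the map ``truncated series $\mapsto [t^n]p(\text{series})$''; what you actually need is a Lipschitz bound for that coefficient-extraction map in terms of the input coefficients $a_0,\ldots,a_n$. This is straightforward to derive (it is again a polynomial map, and the same majorant controls its partial derivatives), but it is not literally the cited lemma. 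With those two clarifications your argument goes through.
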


\begin{proof}
The first result uses Theorem 3 of \cite{BostanChyzakOllivierSalvySchostSedoglavic2007}
which shows that the complexity of computing the first $\omega$ terms of the series of the solution $y$ of
\[y'=\varphi(t,\phi)\qquad y(0)=v\]
where $\phi$ is a vector of dimension $d$ of multivariate power series with coefficients in $\R$,
is bounded by
\begin{equation}\label{eq:generic_series_comp_bound}\bigO{L(\omega)+\min(\texttt{MM}(d,\omega),d^2\texttt{M}(\omega)\log \omega)}\end{equation}
where $\texttt{MM}(d,k)$ is the arithmetic cost of multiplying two $d\times d$ matrices with
polynomial entries with degree less than $k$, $\texttt{M}(k)=\texttt{MM}(1,k)$
and $L(\omega)$ is the cost of computing the first $\omega$ terms of the composition $\varphi(t,s(t))$
and $\mathbf{Jac}(\varphi)(t,s(t))$ for any powers series $s$.

In our case, $\phi(t,s(t))=p(s(t))$ is a polynomial so computing the first $\omega$ terms of $p(s(t))$
and $\mathbf{Jac}(p)(s(t))$ costs a number of arithmetical operations polynomial in $\omega$
and $k^d$ (the maximum number of coefficients in $p$) where $k$ is the degree of $p$. In other words:
\[L(\omega)=\poly(\omega,k^d)\]
Furthermore, \cite{BostanChyzakOllivierSalvySchostSedoglavic2007} mentions that we can always
choose $\texttt{MM}(d,k)$ in $\bigO{d^a\texttt{M}(k)}$
where $a$ is the matrix multiplication exponent, and $\texttt{M}(k)=\bigO{k\log k\log\log k}$. In
other words:
\[\texttt{M}(k)=\poly(k)\qquad\texttt{MM}(d,k)=\poly(d,k)\]
Putting everything together we get that \eqref{eq:generic_series_comp_bound} is bounded
by a polynomial in $\omega$, $d$ and $k^d$. Once we have the first $\omega$ terms,
we can easily evaluate the truncated Taylor series with a number of operations polynomial
in $\omega$. In other words:
\[\operatorname{TL}_{arith}=\poly(\omega,d,k^d)\]

%

In \cite{BGP12} we described a very naive way of implementing this algorithm,
showing that the rational bit complexity is bounded by:
\[
\operatorname{TL}_{\Q,bit}=\poly((k\omega)^d,\log\max(1,t)\sigmap{p}\max(1,\infnorm{y_0}),-\log\varepsilon)
\]
More explicit formulas can be found in \cite{MM93}.


\end{proof}

Notice that there is a significant difference
between the two complexity notions: the bit-complexity depends on the accuracy $\varepsilon$ whereas the arithmetic
complexity does not. Indeeed, in the arithmetic model the computation is exact whereas
in the Turing model only finite approximations can be computed in finite time. Also note
that the bit-complexity result gives an implicit bound on the size of result, which is
an interesting fact by itself.

As a side note, it is worth noting that for purely rational inputs,
it would be possible to compute the truncated Taylor series with infinite precision
($\varepsilon=0$) because the result is a rational number \cite{BGP12}. However this is not helpful
in general because without a precision control of the output precision, no algorithm would have polynomial
running time. For example, if the exact output precision is double the one of the input,
after $k$ iterations, the number of digits of the output is exponential in $k$.

\section{The generic Taylor method}\label{sec:generic_taylor}

We consider a generic adaptive Taylor meta-algorithm to numerically solve \eqref{eq:pivp}.
This is a meta-algorithm in the sense that we do not specify, for now, how the parameters are chosen.
As a matter of fact, most Taylor algorithms can be written this way, including the Euler method.
The goal of this algorithm is to compute $x\in\R^d$ such
that $\infnorm{x-y(t)}<\varepsilon$ where $y$ satisfies \eqref{eq:pivp}, given as inputs $t\in\R$, $\varepsilon\in]0,1]$,
$p$ and $y_0$.
We assume that the meta-algorithm uses the following parameters:
\begin{itemize}
\item $n\in\N$ is the number of steps of the algorithm
\item $t_0<t_1<\ldots<t_n=t$ are the intermediate times
\item $\delta t_i=t_{i+1}-t_i$ are the time steps
\item for $i\in\intinterv{0}{n-1}$, $\omega_i\in\N$ is the order at time $t_i$ and $\mu_i>0$ is the rounding error at time $t_i$
(see \eqref{eq:alg_step})
\item $\tilde{y}_i\in\R^d$ is the approximation of $y$ at time $t_i$
\item $\varepsilon_i\geqslant\infnorm{y(t_i)-\tilde{y}_i}$ is a bound on the error at step $i$
\end{itemize}

This meta-algorithm works by solving the ODE \eqref{eq:pivp} with initial condition
$y(t_i)=\tilde{y}_i$ over a small time interval $[t_i,t_{i+1}]$,
yielding as a result the approximation $\tilde{y}_{i+1}$ of $y(t_{i+1})$.
This approximation over this small time interval is obtained using the algorithm
of \secref{sec:ode_taylor}, through a Taylor approximation of order $\omega_i$. This procedure is repeated over 
$[t_0,t_1], [t_1,t_2],\dots,[t_i,t_{i+1}],\dots$ until we reach the desired time $t_n=t$. 
Therefore the meta-algorithm is only assumed to satisfy the following inequality at each step:
\begin{equation}\label{eq:alg_step}
\infnorm{\tilde{y}_{i+1}-\taylor{t_i}{\omega_i}\Phi(t_i,\tilde{y}_i)(t_{i+1})}\leqslant\mu_i
\end{equation}
Note in particular that this implies that the solution at $t_{i+1}$ can be computed via the Taylor series $\taylor{t_i}{\omega_i}\Phi(t_i,\tilde{y}_i)$.
A direct consequence of \eqref{eq:alg_step} and the triangle inequality is that:
\begin{equation}\label{eq:eps_ineq_1}
\begin{aligned}
\varepsilon_{i+1}
&\leqslant\infnorm{y(t_{i+1})-\Phi(t_i,\tilde{y}_i)(t_{i+1})}\\
&+\infnorm{\Phi(t_i,\tilde{y}_i,t_{i+1})-\taylor{t_i}{\omega_i}\Phi(t_i,\tilde{y}_i)(t_{i+1})}\\
&+\mu_i
\end{aligned}
\end{equation}

The first term on the right-hand side is the the global error: it arises because after one step,
the solution we are computing lies on a different solution curve than the true solution.
The second term is the local (truncation) error: at each step we only compute a truncated Taylor series
instead of the full series. The third error is the rounding error: even if we truncate the
Taylor series and evaluate it, we only have a finite number of bits to store it
in the Turing model and this rounding introduces an error.

In order to bound the first two quantities, we will rely on the results of the
previous sections. Since those results only hold for reasonable (not too big) time steps, we
need to assume bounds on the time steps. To this end, we introduce the following quantities:
\begin{equation}\label{eq:beta_gamma}
\begin{aligned}
\beta_i&=k\sigmap{p}\max(1,\infnorm{\tilde{y}_i})^{k-1}\delta t_i\\
\gamma_i&=\int_{t_i}^{t_{i+1}}k\sigmap{p}(\varepsilon+\infnorm{y(u)})^{k-1}du
\end{aligned}
\end{equation}
where $\delta t_i=t_{i+1}-t_i$. The choice of the values for $\beta_i$ and $\gamma_i$
comes from \thref{th:wws06} and \thref{th:dependency_right_side}, respectively.
We assume that the meta-algorithm satisfies the following extra property:
\begin{equation}\label{eq:assumption_1}
\beta_i<1
\end{equation}
Back to \eqref{eq:eps_ineq_1}, we apply \thref{th:wws06} and \thref{th:dependency_right_side} to get
\begin{equation}\label{eq:eps_ineq_2}
\varepsilon_{i+1}\leqslant\varepsilon_i e^{\gamma_i}+\frac{\max(1,\infnorm{\tilde{y}_i})\beta_i^{\omega_i}}{1-\beta_i}+\mu_i\\
\end{equation}

We can now apply \lemref{lem:rec_seq_geom_arith} to \eqref{eq:eps_ineq_2},
since all quantities are positive, therefore obtaining a bound on $\varepsilon_n$.
We further bound it using the fact that $\prod_{j=i+1}^{n-1}a_j\leqslant\prod_{j=0}^{n-1}a_j$ 
when $a_j\geqslant1$. This gives a bound on the error done by the generic Taylor algorithm:
\begin{equation}\label{eq:eps_ineq_3}
\begin{aligned}
\varepsilon_n&\leqslant(\varepsilon_0 +B)e^A\\
A&=\sum_{i=0}^{n-1}\gamma_i=\int_{t_0}^{t_n}k\sigmap{p}(\varepsilon+\infnorm{y(u)})^{k-1}du\\
B&=\sum_{i=0}^{n-1}\frac{\max(1,\infnorm{\tilde{y}_i})\beta_i^{\omega_i}}{1-\beta_i}+\sum_{i=0}^{n-1}\mu_i
\end{aligned}
\end{equation}

In other words, assuming that $\beta_i<1$ yields a generic error bound on the output of the algorithm.
This leaves us with a large space to play with and optimize the parameters ($\beta_i$, $\gamma_i$, $\mu_i$)
to get a correct and efficient algorithm.

\section{The adaptive Taylor algorithm}\label{Sec:Algorithm}

In this section, we instantiate the generic algorithm of the previous section
using carefully chosen parameters to optimize its complexity. In order to analyze the algorithm, it is
useful to introduce the following quantity:

\begin{equation}\label{eq:hyp_I_lambda}
\IntI(t_0,t)=\int_{t_0}^{t}k\sigmap{p}\max(1,\varepsilon+\infnorm{y(u)})^{k-1}du
\end{equation}

This algorithm will be parametrized by the choice of a ``hint'' $I$
and the number of steps $n$.

Equation \eqref{eq:eps_ineq_3}
suggests that it is advantageous to choose $\beta_i$ smaller than $1$ so that
$\beta_i^{\omega_i}$ is small and not too small so that the number of steps doesn't blow up due to \eqref{eq:beta_gamma}. That's why we introduced the condition \eqref{eq:assumption_1}.
Since a reasonable choice for this parameter is nontrivial, we introduce another
parameter $\lambda$ to be fixed later and we assume that $\beta_i=\lambda$, except
for the last time step which might require a smaller value to stumble exactly on the final time. We assume that:

\begin{equation}\label{eq:parameters}
n>0\qquad I>0\qquad 0\leqslant\lambda\leqslant\frac{1}{2}
\end{equation}

Let us now establish the other parameters of the algorithm. We define the following values:

\begin{equation}\label{eq:choice_deltat}
\delta t_i=\min\left(t-t_{i},\frac{\lambda}{k\sigmap{p}\max(1,\infnorm{\tilde{y}_i})^{k-1}}\right)
\end{equation}
\begin{equation}\label{eq:choice_omega_mu_eps_eta}
\omega_i=\log_2\frac{6n\max(1,\infnorm{\tilde{y}_i})}{\eta}
\qquad
\mu_i=\frac{\eta}{3n}
\qquad
\varepsilon_0\leqslant\frac{\varepsilon}{3}e^{-I}
\qquad
\eta=\varepsilon e^{-I}
\end{equation}

We will now see that under natural constraints on $I$, the algorithm will be correct.

\begin{lemma}[Algorithm is conditionally correct]\label{lem:algo_correct}
If $I\geqslant\IntI(t_0,t_n)$ then the choices of the parameters, implied by \eqref{eq:choice_deltat} and
\eqref{eq:choice_omega_mu_eps_eta} guarantee that $\varepsilon_n\leqslant\varepsilon$.
\end{lemma}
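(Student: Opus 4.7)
The plan is to start from the generic error bound \eqref{eq:eps_ineq_3} and show that each of the three pieces ($\varepsilon_0$, the Taylor-truncation sum, and the rounding sum) is at most $\eta/3 = \varepsilon e^{-I}/3$, so that $\varepsilon_0 + B \leqslant \eta$, while the exponential factor $e^A$ is at most $e^I$. Multiplying these two bounds collapses to $\varepsilon_n \leqslant \eta e^I = \varepsilon$.

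The first step is to verify the standing hypothesis $\beta_i < 1$ of \eqref{eq:assumption_1}. The choice \eqref{eq:choice_deltat} of $\delta t_i$ exactly takes $\delta t_i \leqslant \lambda/(k\sigmap{p}\max(1,\infnorm{\tilde{y}_i})^{k-1})$, and substituting into the definition \eqref{eq:beta_gamma} of $\beta_i$ yields $\beta_i \leqslant \lambda \leqslant 1/2 < 1$, so \eqref{eq:eps_ineq_3} applies. Next I bound $A$: dropping the ``$\max(1,\cdot)$'' in $\IntI$ only decreases the integrand, so
\[A = \int_{t_0}^{t_n} k\sigmap{p}(\varepsilon+\infnorm{y(u)})^{k-1}du \leqslant \IntI(t_0,t_n) \leqslant I,\]
using the hypothesis of the lemma. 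Hence $e^A \leqslant e^I$, i.e.\ $e^{-I} \leqslant e^{-A}$, and consequently $\eta \leqslant \varepsilon e^{-A}$.

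The main per-step bookkeeping is the Taylor-truncation sum in $B$. For each $i$, since $\beta_i \leqslant 1/2$, we have $1/(1-\beta_i) \leqslant 2$ and $\beta_i^{\omega_i} \leqslant 2^{-\omega_i}$. Plugging in the order $\omega_i = \log_2(6n\max(1,\infnorm{\tilde{y}_i})/\eta)$ chosen in \eqref{eq:choice_omega_mu_eps_eta} gives
\[\frac{\max(1,\infnorm{\tilde{y}_i})\beta_i^{\omega_i}}{1-\beta_i} \leqslant 2\max(1,\infnorm{\tilde{y}_i})\cdot\frac{\eta}{6n\max(1,\infnorm{\tilde{y}_i})} = \frac{\eta}{3n}.\]
Summing over $i\in\intinterv{0}{n-1}$ gives $\eta/3$. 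Similarly, $\sum_i \mu_i = n\cdot\eta/(3n) = \eta/3$, and the assumption $\varepsilon_0 \leqslant (\varepsilon/3)e^{-I} = \eta/3$ provides the third third. Therefore $\varepsilon_0 + B \leqslant \eta$.

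Putting everything together, \eqref{eq:eps_ineq_3} yields $\varepsilon_n \leqslant \eta e^A \leqslant \eta e^I = \varepsilon$, which is the desired conclusion. The only slightly subtle point, and the one I would be careful to write out, is that $\omega_i$ depends on $\tilde{y}_i$ (not on $y(t_i)$), so the per-step estimate uses $\max(1,\infnorm{\tilde{y}_i})$ on both sides and cancels cleanly; no global a priori bound on $\infnorm{\tilde{y}_i}$ is needed for the truncation sum. The rest is routine arithmetic given the parameter choices, so there is no real obstacle.
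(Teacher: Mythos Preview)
Your proof is correct and follows essentially the same route as the paper's: verify $\beta_i\leqslant\lambda\leqslant\tfrac12$, bound $A\leqslant\IntI(t_0,t_n)\leqslant I$, use $\beta_i^{\omega_i}\leqslant 2^{-\omega_i}$ and $1/(1-\beta_i)\leqslant 2$ together with the choice of $\omega_i$ to make each truncation term equal $\eta/(3n)$, and then combine the three $\eta/3$ contributions against $e^A\leqslant e^I$. Your version is slightly more explicit about checking the hypothesis \eqref{eq:assumption_1} and about the cancellation of $\max(1,\infnorm{\tilde{y}_i})$, but the argument is the same.
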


\begin{proof}
We only need to put all pieces together using \eqref{eq:eps_ineq_3}. Indeed, it is clear that
\begin{equation}
A\leqslant\IntI(t_0,t_n)\leqslant I
\end{equation}
Furthermore the way how we chose $\lambda$ implies $\beta_i\leqslant\lambda$,
and together with \eqref{eq:choice_omega_mu_eps_eta} and \eqref{eq:hyp_I_lambda}, it implies that:
\begin{equation*}
\frac{\beta_i^{\omega_i}}{1-\beta_i}\leqslant\frac{\lambda^{\omega_i}}{1-\lambda}
\leqslant \frac{2^{-\omega_i}}{1/2}\leqslant \frac{2\eta}{6n\max(1,\infnorm{\tilde{y}_i})}
\end{equation*}
Thus:
\begin{align*}
\varepsilon_n&\leqslant(\varepsilon_0 +B)e^A\\
&\leqslant \frac{\varepsilon}{3}e^{-I}e^{I}+e^I\left(\sum_{i=0}^{n-1}\frac{\eta}{3n}+\sum_{i=0}^{n-1}\frac{\eta}{3n}\right)\\
&\leqslant \frac{\varepsilon}{3}+e^I\frac{2\eta}{3}\leqslant \varepsilon
\end{align*}
\end{proof}

This result shows that if we know $\IntI(t_0,t)$ then we can compute the solution
numerically by plugging $I=\IntI(t_0,t)$ in the algorithm.
However, this result does not tell us anything about how to find it.
Furthermore, a desirable property of the algorithm would be to detect if the argument
$I$ is not large enough, instead of simply returning garbage. Another point
of interest is the choice of $n$: if we pick $n$ too small in the algorithm,
it will be correct but $t_n<t$, in other words we won't reach the target time.

Both issues boil down to the lack of a relationship between $I$ and $n$: this is the point of
the following lemma. We will see that it also explains how to choose $\lambda$.

\begin{lemma}[Relationship between $n$ and $I$]\label{lem:bound_n_I}
Assume that $k\geqslant2$ and choose some $\lambda$, $I$, and $\varepsilon$ satisfying 
\begin{equation*}
\frac{1}{\lambda}\geqslant1+\frac{k}{1-2k\varepsilon}
\qquad
I\geqslant\IntI(t_0,t_n)
\qquad
\varepsilon\leqslant\frac{1}{4k}
\end{equation*}
Then for all $i\in\intinterv{0}{n-1}$,
\begin{equation*}
\beta_i(1-e^{-1})\leqslant \IntI(t_i,t_{i+1})\leqslant \beta_i e,
\end{equation*}
so in particular:
\begin{equation*}
\tfrac{1}{2}\beta_i\leqslant \IntI(t_i,t_{i+1})\leqslant 3\beta_i.
\end{equation*}
\end{lemma}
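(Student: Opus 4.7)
The plan is to relate the integrand of $\IntI(t_i,t_{i+1})$ to the constant value $\beta_i/\delta t_i = k\sigmap{p}\max(1,\infnorm{\tilde{y}_i})^{k-1}$ pointwise for $u\in[t_i,t_{i+1}]$, and then integrate over the interval of length $\delta t_i$. Since both sides of the claimed inequality scale linearly with $\delta t_i$, it suffices to establish the pointwise sandwich $1-e^{-1}\leqslant \max(1,\varepsilon+\infnorm{y(u)})^{k-1}/\max(1,\infnorm{\tilde{y}_i})^{k-1}\leqslant e$ for every $u$ in the interval.

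First, I would invoke \lemref{lem:algo_correct} on the initial subsegment of the algorithm up to time $t_i$. Since $\IntI(t_0,t_i)\leqslant\IntI(t_0,t_n)\leqslant I$, that lemma yields $\infnorm{y(t_j)-\tilde{y}_j}\leqslant\varepsilon_j\leqslant\varepsilon$ for all $j\leqslant i$. Setting $\phi(u)=\varepsilon+\infnorm{y(u)}$, this gives $\max(1,\phi(t_i))\leqslant(1+2\varepsilon)\max(1,\infnorm{\tilde{y}_i})$ and $\max(1,\phi(t_i))\geqslant\max(1,\infnorm{\tilde{y}_i})$ by the triangle inequality.

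Second, from the PIVP and the standard bound $\infnorm{p(y)}\leqslant\sigmap{p}\max(1,\infnorm{y})^k$, I obtain the two-sided differential inequality $|\phi'(u)|\leqslant\sigmap{p}\max(1,\phi(u))^k$. Integrating $d\phi^{1-k}/du$ in both directions yields, for $Y:=(k-1)\sigmap{p}\max(1,\phi(t_i))^{k-1}(u-t_i)<1$,
\begin{equation*}
\frac{\max(1,\phi(t_i))^{k-1}}{1+Y}\leqslant \max(1,\phi(u))^{k-1}\leqslant \frac{\max(1,\phi(t_i))^{k-1}}{1-Y}.
\end{equation*}
The time-step rule \eqref{eq:choice_deltat} enforces $\beta_i\leqslant\lambda$, so for $u-t_i\leqslant\delta t_i$ we have $Y\leqslant\frac{k-1}{k}(1+2\varepsilon)^{k-1}\lambda$.

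Third, the hypotheses $1/\lambda\geqslant 1+k/(1-2k\varepsilon)$ and $\varepsilon\leqslant 1/(4k)$ have been calibrated so that the combined bound
\begin{equation*}
\frac{\max(1,\phi(u))^{k-1}}{\max(1,\infnorm{\tilde{y}_i})^{k-1}}\leqslant \frac{(1+2\varepsilon)^{k-1}}{1-Y}\leqslant e
\end{equation*}
and the symmetric lower bound $1/(1+Y)\geqslant 1-e^{-1}$ both hold. Multiplying these inequalities by $k\sigmap{p}\max(1,\infnorm{\tilde{y}_i})^{k-1}$ and integrating over $[t_i,t_{i+1}]$ produces exactly $\beta_i(1-e^{-1})\leqslant\IntI(t_i,t_{i+1})\leqslant\beta_i e$. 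The simpler form is immediate from $1-e^{-1}>1/2$ and $e<3$.

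The main obstacle is the verification in the third step: one must carefully track the factor $(1+2\varepsilon)^{k-1}$ arising from the discrepancy between $y(t_i)$ and $\tilde{y}_i$ and combine it with $1/(1-Y)$ so that the product is dominated by $e$ uniformly in $k$. This is what forces the precise form $1/\lambda\geqslant 1+k/(1-2k\varepsilon)$ of the hypothesis rather than a cruder bound like $\lambda\leqslant 1/2$. A minor case distinction for $\infnorm{\tilde{y}_i}\leqslant 1$, where $\max(1,\infnorm{\tilde{y}_i})=1$, proceeds identically since the bound $\max(1,\phi(t_i))\leqslant 1+2\varepsilon$ still holds in that regime.
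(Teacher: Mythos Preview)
Your approach is correct and takes a genuinely different route from the paper's. The paper bounds $\infnorm{y(u)-\tilde{y}_i}$ on $[t_i,t_{i+1}]$ by combining \thref{th:dependency_right_side} with \thref{th:wws06} (order~$1$, applied to the local flow from $\tilde{y}_i$), obtaining a sandwich of the form $\alpha(1-x\xi)\leqslant\max(1,\varepsilon+\infnorm{y(u)})\leqslant 2\varepsilon+\alpha(1+x\xi)$ with $x=\lambda/(1-\lambda)$, $\xi=(u-t_i)/\delta t_i$; it then integrates $(a\pm b\xi)^{k-1}$ explicitly over $\xi\in[0,1]$ and extracts the constants $1-e^{-1}$ and $e$ from an analysis of $\frac{1-(1-x)^k}{kx}$ and $(1+1/k)^{k-1}$, the hypothesis on $\lambda$ being exactly what forces $2\varepsilon+x\leqslant 1/k$. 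Your Riccati integration of $d\psi^{1-k}/du$ is sharper: it controls the whole power $\psi^{k-1}$ by a single factor $1/(1\pm Y)$ rather than $(1\pm x\xi)^{k-1}$, so you get a \emph{pointwise} sandwich on the integrand and can integrate trivially. Under the stated hypotheses one has $(1+2\varepsilon)^{k-1}\leqslant e^{1/2}$ and $\lambda\leqslant 1/(k+1)$, hence $Y\leqslant\frac{k-1}{k(k+1)}e^{1/2}\leqslant e^{1/2}/6$, giving $(1+2\varepsilon)^{k-1}/(1-Y)\leqslant e^{1/2}/(1-e^{1/2}/6)<e$ and $1/(1+Y)\geqslant 1/(1+e^{1/2}/6)>1-e^{-1}$; so your deferred ``third step'' does go through, in fact with room to spare. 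The trade-off is that the paper's calculation explains why the hypothesis on $\lambda$ takes its exact form, whereas in your argument that form is sufficient but not visibly necessary.
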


\begin{proof}
Note that the hypothesis on $\varepsilon$ is mostly to make sure that fraction on the right-hand side of the condition for $\lambda$ is well defined.
Pick any $u\in\left[t_i,t_{i+1}\right]$ and
 apply \thref{th:dependency_right_side} with 
$\mu(u)\leqslant \varepsilon_{i}e^{\gamma_{i}}\leqslant\varepsilon$
to get that:
\begin{equation*}
\infnorm{y(u)-\Phi_p(t_i,\tilde{y}_{i})(u)}\leqslant\varepsilon
\end{equation*}
Furthermore, for any such $u$, apply \thref{th:wws06} with $n=1$ to get:
\begin{equation*}
\infnorm{\tilde{y}_{i+1}-\Phi_p(t_i,\tilde{y}_{i})(u)}
\leqslant\frac{\alpha|M(u-t_{i})|}{1-|M(u-t_{i})|}
\end{equation*}
where
$M=k\sigmap{p}\alpha^{k-1}$ and $\alpha=\max(1,\infnorm{\tilde{y}_{i}})$.
Putting everything together we get for $\xi=\frac{u-t_i}{\delta t_i}$ using \eqref{eq:beta_gamma}:
\begin{equation*}
\infnorm{y(u)-\tilde{y}_i}\leqslant\varepsilon+\frac{\alpha M(u-t_i)}{1-M(u-t_i)}
\leqslant\varepsilon+\frac{\alpha \beta_i\xi}{1-\beta_i\xi}\leqslant
\varepsilon+\frac{\alpha \lambda\xi}{1-\lambda\xi}
\end{equation*}
Consequently:
\begin{equation*}
\infnorm{\tilde{y}_i}-\frac{\alpha \lambda\xi}{1-\lambda\xi}
\leqslant\varepsilon+\infnorm{y(u)}
\leqslant2\varepsilon+\infnorm{\tilde{y}_i}+\frac{\alpha \lambda\xi}{1-\lambda\xi}
\end{equation*}
And since $\alpha=\max(1,\infnorm{\tilde{y}_i})$:
\begin{equation*}
\infnorm{\tilde{y}_i}-\frac{\alpha \lambda\xi}{1-\lambda\xi}
\leqslant\varepsilon+\infnorm{y(u)}
\leqslant2\varepsilon+\alpha\left(1+\frac{\lambda\xi}{1-\lambda\xi}\right)
\end{equation*}
And a case analysis brings:
\begin{equation*}
\max\left(1,\alpha-\frac{\alpha\lambda\xi}{1-\lambda\xi}\right)
\leqslant\max(1,\varepsilon+\infnorm{y(u)})
\leqslant\max\left(1,2\varepsilon+\alpha+\frac{\alpha\lambda\xi}{1-\lambda\xi}\right)
\end{equation*}
Which can be overapproximated by:
\begin{equation*}
\alpha-\frac{\alpha\lambda\xi}{1-\lambda\xi}
\leqslant\max(1,\varepsilon+\infnorm{y(u)})
\leqslant2\varepsilon+\alpha+\frac{\alpha\lambda\xi}{1-\lambda\xi}
\end{equation*}
And finally:
\begin{equation*}
\left(\alpha-\frac{\alpha\lambda\xi}{1-\lambda}\right)^{k-1}
\leqslant\max(1,\varepsilon+\infnorm{y(u)})^{k-1}
\leqslant\left(2\varepsilon+\alpha+\frac{\alpha\lambda\xi}{1-\lambda}\right)^{k-1}
\end{equation*}
A simple calculation shows that $\int_0^1(a+bu)^{k-1}du=\frac{(a+b)^k-a^k}{bk}$.
Integrating the previous bounds over $[t_i,t_{i+1}]$, we get, for $x=\frac{\lambda}{1-\lambda}$:
\begin{equation*}
\int_{t_i}^{t_{i+1}}\left(\alpha-\frac{\alpha\lambda\xi}{1-\lambda}\right)^{k-1}du
=\alpha^{k-1}\delta t_i\frac{1-(1-x)^k}{kx}
\end{equation*}
Realising that $x=\frac{1}{\frac{1}{\lambda}-1}$, the hypothesis on $\lambda$ yields:
\begin{equation}\label{eq:ineq_x_k}
x\leqslant\frac{1-2k\varepsilon}{k}\leqslant\frac{1}{k}
\end{equation}
A simple analysis of the function $x\mapsto \frac{1-(1-x)^k}{kx}$ shows that it is
decreasing on $]0,\frac{1}{k}]$ and so satisfies, for $x$ in this interval,
\begin{equation*}
\frac{1-(1-x)^k}{kx}\geqslant1-\left(1-\frac{1}{k}\right)^k\geqslant1-e^{-k\ln(1-\frac{1}{k})}\geqslant1-e^{-1}
\end{equation*}
So finally we get:
\begin{equation}\label{eq:main_ineq_int_left}
\int_{t_i}^{t_{i+1}}\left(\alpha-\frac{\alpha\lambda\xi}{1-\lambda}\right)^{k-1}du
\geqslant \alpha^{k-1}\delta t_i(1-e^{-1})
\end{equation}
On the other side, we get:
\begin{equation*}
\int_{t_i}^{t_{i+1}}\left(2\varepsilon+\alpha+\frac{\alpha\lambda\xi}{1-\lambda}\right)^{k-1}du
=\alpha^{k-1}\delta t_i\frac{(2\frac{\varepsilon}{\alpha}+1+x)^k-(2\frac{\varepsilon}{\alpha}+1)^k}{kx}
\end{equation*}
And since $\alpha\geqslant1$ and $b^k-a^k\leqslant k(b-a)b^{k-1}$ when $b\geqslant a$, we get:
\begin{equation*}
\int_{t_i}^{t_{i+1}}\left(2\varepsilon+\alpha+\frac{\alpha\lambda\xi}{1-\lambda}\right)^{k-1}du
\leqslant\alpha^{k-1}\delta t_i(2\varepsilon+1+x)^{k-1}
\end{equation*}
We can now use \eqref{eq:ineq_x_k} to get:
\begin{equation}\label{eq:main_ineq_int_right}
\int_{t_i}^{t_{i+1}}\left(2\varepsilon+\alpha+\frac{\alpha\lambda\xi}{1-\lambda}\right)^{k-1}du
\leqslant\alpha^{k-1}\delta t_i\left(1+\frac{1}{k}\right)^{k-1}
\leqslant\alpha^{k-1}\delta t_i e
\end{equation}
We can now put together \eqref{eq:main_ineq_int_left} and \eqref{eq:main_ineq_int_right} using that $M=k\sigmap{p}\alpha^{k-1}$:
\begin{equation*}
\delta t_i M(1-e^{-1})
\leqslant\int_{t_i}^{t_{i+1}}\hspace{-1em}k\sigmap{p}\max(1,\varepsilon+\infnorm{y(u)})^{k-1}du
\leqslant\delta t_i M e
\end{equation*}
which shows the result since $\beta_i=M\delta t_i$. The last inequalities trivially
follow from bounds on $e$.
\end{proof}

Next we implement the generic meta-algorithm of Section \ref{sec:generic_taylor}
to the choice of parameters specified by \ref{eq:choice_deltat} and \ref{eq:choice_omega_mu_eps_eta}
and obtain \algoref{alg:solve_pivp_taylor}.

\begin{lemma}[Algorithm is correct]\label{lem:solve_pivp_taylor_correct}
Let $t\in\R$, $I>0$ and $\varepsilon>0$, and assume that $y$ satisfies \eqref{eq:pivp} over $[t_0,t]$.
Let $x=\operatorname{SolvePIVPVariable}(t_0,y_0,p,t,\varepsilon,I)$, where $\operatorname{SolvePIVPVariable}$ is given by Algorithm \ref{alg:solve_pivp_taylor}. Then
\begin{itemize}
\item Either $x=\bot$ or $\infnorm{x-y(t)}\leqslant\varepsilon$
\item Furthermore, if $\displaystyle I\geqslant6\IntI(t_0,t)$
then $x\neq\bot$
\end{itemize}
\end{lemma}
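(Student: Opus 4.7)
The plan is to derive both bullets from \lemref{lem:algo_correct} and \lemref{lem:bound_n_I}, leveraging the two-sided estimate $\tfrac12\beta_i \leq \IntI(t_i,t_{i+1}) \leq 3\beta_i$. I expect that the algorithm returns $\bot$ precisely when its budget of $n$ steps has been exhausted before reaching $t_n=t$, where $n$ is chosen internally (essentially as $\lfloor I/(3\lambda)\rfloor$) so that $3n\lambda\leq I$ while $n\lambda$ is kept close enough to $I/3$ that a sufficiently large hint guarantees termination.

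For the first bullet, suppose $x \neq \bot$, so the algorithm has completed with $t_{n'} = t$ for some $n' \leq n$. By \eqref{eq:choice_deltat} each $\beta_i \leq \lambda$, so the upper half of \lemref{lem:bound_n_I} gives $\IntI(t_i,t_{i+1}) \leq 3\beta_i \leq 3\lambda$, and summing yields $\IntI(t_0,t_{n'}) \leq 3n'\lambda \leq 3n\lambda \leq I$. This is exactly the hypothesis of \lemref{lem:algo_correct}, which delivers $\varepsilon_{n'} \leq \varepsilon$, hence $\infnorm{x - y(t)} \leq \varepsilon$.

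For the second bullet, suppose $I \geq 6\IntI(t_0,t)$ and, for contradiction, that $x = \bot$. Since the budget was exhausted before reaching $t$, the index went all the way to $n$ with $t_n < t$, so the minimum in \eqref{eq:choice_deltat} never picked the $t-t_i$ branch and $\beta_i = \lambda$ throughout. The lower half of \lemref{lem:bound_n_I} then gives $\IntI(t_i,t_{i+1}) \geq \tfrac{\lambda}{2}$, so
\[\tfrac{n\lambda}{2} \leq \IntI(t_0,t_n) \leq \IntI(t_0,t) \leq \tfrac{I}{6}.\]
This forces $n \leq I/(3\lambda)$, contradicting the algorithm's choice of $n$, which is built so that the factor $6$ of slack in the hypothesis absorbs the integer rounding in $\lfloor I/(3\lambda)\rfloor$. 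Hence $t_n = t$, so $x \neq \bot$ and the first bullet applies.

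The main delicate point is the apparent circularity in invoking \lemref{lem:bound_n_I}: its own hypothesis $I \geq \IntI(t_0,t_n)$ is essentially what we are trying to derive in the first bullet. I plan to resolve this by running the argument inductively on the step index, showing that the partial bound $\IntI(t_0,t_i) \leq I$ and the invariant $\varepsilon_i \leq \varepsilon$ both survive each iteration, so the local estimate $\IntI(t_i,t_{i+1}) \leq 3\beta_i$ can be invoked stepwise without needing the global bound a priori. The algebraic side conditions $1/\lambda \geq 1 + k/(1-2k\varepsilon)$ and $\varepsilon \leq 1/(4k)$ demanded by \lemref{lem:bound_n_I} will be secured by fixing $\lambda$ once and for all in the algorithm's preamble.
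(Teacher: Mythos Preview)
Your plan rests on a mis-guess of how \algoref{alg:solve_pivp_taylor} works. The algorithm has \emph{two} ways to return $\bot$: (i) the step count reaches $N = 1 + 2I/\lambda$, and (ii) an explicit safety test $I < 3((i-1)\lambda + \beta)$ fires after each step. You only account for (i), and you assume the budget is roughly $\lfloor I/(3\lambda)\rfloor$ so that ``$3n\lambda \leq I$''. That inequality is false for the actual algorithm: with $N = 1 + 2I/\lambda$ one has $3N\lambda > 6I$, so your chain $\IntI(t_0,t_{n'}) \leq 3n'\lambda \leq 3n\lambda \leq I$ breaks at the last step. The mechanism that really forces $I \geq \IntI(t_0,t_n)$ when $x \neq \bot$ is the safety test (ii), not the step budget.

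The paper handles the first bullet by a case split on whether $I \geq \IntI(t_0,t)$. If so, \lemref{lem:algo_correct} applies directly. If not, one must show the algorithm returns $\bot$: pick (by the intermediate value theorem) a time $u \in [t_0,t)$ with $I = \IntI(t_0,u)$, run the algorithm ``virtually'' with final time $u$ (so that \lemref{lem:bound_n_I} \emph{does} apply on $[t_0,u]$), and read off that $I < 3(i_0\lambda + \beta_{i_0})$ for the step $i_0$ containing $u$; this forces the safety test to fire in the real run. Your proposed inductive workaround for the circularity in \lemref{lem:bound_n_I} could plausibly be made to work, but only once you incorporate test (ii); as written, the induction has nothing to anchor to. For the second bullet, your lower-bound computation is fine for ruling out (i), but you must also check that test (ii) cannot fire when $I \geq 6\IntI(t_0,t)$; this is where the factor $6$ is actually consumed, via the upper bound $\IntI(t_0,t_n) \geq \tfrac12((n-1)\lambda + \beta_{n-1})$ from \lemref{lem:bound_n_I}.
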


\begin{algorithm}
\begin{algorithmic}[1]
\Require{$t_0\in\R$ the initial time}
\Require{$y_0\in\R^d$ the initial condition}
\Require{$p\in\R^d[\R^d]$ polynomial of the PIVP}
\Require{$t\in\R$ the time step}
\Require{$\varepsilon\in\R$ the accuracy requested}
\Require{$I\in\R$ the integral hint}
\Function{SolvePIVPVariable}{$t_0,y_0,p,t,\varepsilon,I$}
\State $k\algassign\max(2,\degp{p})$
\State $\varepsilon\algassign\min\left(\varepsilon,\frac{1}{4k}\right)$
\State $u\algassign t_0$
\State $\tilde{y}\algassign y_0$
\State $i\algassign0$
\State $\lambda\algassign1-\frac{k}{1-2k\varepsilon+k}$
\State $N\algassign 1+\frac{2I}{\lambda}$
\State $\eta\algassign \varepsilon e^{-I}$
\State $\mu\algassign\dfrac{\eta}{3N}$
\State $\beta\algassign0$
\While{$u<t$}\label{alg:pivp:test1}\Comment{See comments}
    \If{$i\geqslant N$}
    \State\Return{$\bot$}\Comment{Too many steps !}
    \EndIf
    \State $\delta\algassign\min\left(t-u,\dfrac{\lambda}{k\sigmap{p}\max(1,\infnorm{\tilde{y}})^{k-1}}\right)$
    \State $\beta\algassign k\sigmap{p}\max(1,\infnorm{\tilde{y}})^{k-1}\delta $
    \State $\omega\algassign\left\lceil-\log_2\dfrac{\eta}{6N\max(1,\infnorm{\tilde{y}})}\right\rceil$
    \label{alg:pivp:log}\Comment{See comments}
    \State $\tilde{y}\algassign\operatorname{ComputeTaylor}(p,\tilde{y},\omega,\mu,\delta)$
    \State $u\algassign u+\delta$
    \State $i\algassign i+1$
    \If{$I<3((i-1)\lambda+\beta)$}\label{alg:pivp:test2}\Comment{See comments}
        \State\Return{$\bot$}\Comment{Unsafe result !}
    \EndIf
\EndWhile
\State \Return{$\tilde{y}$}
\EndFunction
\Comments{For rational bit complexity}
\CommentLine{\ref{alg:pivp:test1}}{This a comparison between two rational numbers}
\CommentLine{\ref{alg:pivp:log}}{Content of the $\log_2$ is rational, so ceil is computable (see proof)}
\CommentLine{\ref{alg:pivp:test2}}{This a comparison between two rational numbers}
\end{algorithmic}
\caption{PIVP Solving algorithm: $\operatorname{SolvePIVPVariable}$\label{alg:solve_pivp_taylor}}
\end{algorithm}

\begin{proof}
Let $n$ be the number of steps of the algorithm.
It is clear that the algorithm performs exactly as described in the previous section,
in the sense that it either returns $\bot$ or $\tilde{y}_n$ with $n\leqslant N$
which satisfies $t_n=t$ and $N=1+\frac{2I}{\lambda}$. Now consider the two possible cases.

If $I\geqslant\IntI(t_0,t)$ then by \lemref{lem:algo_correct}, we get that
\begin{equation*}
\infnorm{y(t)-\tilde{y}_n}\leqslant\varepsilon
\end{equation*}
so the algorithm is correct if it returns a value instead of $\bot$. Furthermore, by \lemref{lem:bound_n_I}:
\begin{equation*}
\tfrac{1}{2}\sum_{i=0}^{n-1}\beta_i\leqslant\IntI(t_0,t_n)\leqslant 3\sum_{i=0}^{n-1}\beta_i
\end{equation*}
And since $\beta_{n-1}\leqslant\lambda$ and $\beta_i=\lambda$ for $i<n-1$ then
\begin{equation*}
\tfrac{1}{2}((n-1)\lambda+\beta_{n-1})\leqslant\IntI(t_0,t_n)\leqslant 3n\lambda
\end{equation*}
In the case of $I\geqslant6\IntI(t_0,t)$, we further have:
\begin{equation}\label{eq:n_I}
I\geqslant6\IntI(t_0,t)\geqslant 3((n-1)\lambda+\beta_{n-1})
\end{equation}
Consequently, the final test of the algoritm will fail because $\beta$ in the algorithm is exactly
$\beta_{n-1}$. Note however that the algorithm could still return $\bot$ if the
test ``$i>N$'' in the algorithm succeeds. Suppose by contradiction that this is case.
Then $t_n<t$ and $n>N$ otherwise the algorithm would have returned a value
(because $n$ is the number of steps, so it is also the final value of $i$).
In particular this implies that $\beta_{n-1}=\lambda$ so \eqref{eq:n_I} becomes:
\begin{equation*}
I\geqslant3n\lambda
\end{equation*}
But recall that $N=1+\frac{2I}{\lambda}$ so in particular we have:
\begin{equation*}
N\geqslant1+\frac{6n\lambda}{\lambda}\geqslant1+n
\end{equation*}
which is absurd because $n>N$. This means that $t_n=t$ and so the algorithm
returns $\tilde{y}_n$ which is correct, as we so above.

Now comes the case of $I<\IntI(t_0,t)$. This case is more subtle because we cannot directly
apply \lemref{lem:bound_n_I}, indeed we miss the hypothesis on $I$.
First note that we can ignore the case where $t_N<t$ because the algorithm returns $\bot$ in this case anyway.
Since the function $u\mapsto\IntI(t_0,u)$ is continuous on $[t_0,t]$ and
is $0$ on $t_0$ and $>I$ on $t$, we can apply the intermediate value theorem to get:
\begin{equation*}
\exists u\in[t_0,t[\text{ such that }I=\IntI(t_0,u)
\end{equation*}
Since we eliminated the case where $t_N<t$, we know that $t_n=t$ for some $n\leqslant N$ in the algorithm, so necessarily:
\begin{equation*}
\exists i_0\in\intinterv{0}{n-1}\text{ such that }t_{i_0}\leqslant u<t_{i_0+1}
\end{equation*}
As a consequence of $u\mapsto\IntI(t_0,u)$ being an increasing function,
\begin{equation*}
I\geqslant\IntI(t_0,t_{i_0})
\end{equation*}
Imagine for a moment that we run the algorithm again with final time $u$ instead of $t$. A close look at the code
shows that all variables (which we call $t_i'$, $\beta_i'$, and so on) will be the same for $i\leqslant i_0$ but then
$t_{i_0+1}'=u$. So we can apply \lemref{lem:bound_n_I} to this new run of the algorithm on $[t_0,t_{i_0+1}']$ to get that
\begin{equation*}
\tfrac{1}{2}\sum_{i=0}^{i_0}\beta_i'\leqslant\IntI(t_0',t_{i_0+1}')\leqslant 3\sum_{i=0}^{i_0}\beta_i'
\end{equation*}
And since $i_0<n$ then $\beta_i'=\lambda$ for $i<i_0$, and $\beta_{i_0}'<\beta_{i_0}$ because $u=t_{i_0+1}'<t$ so the equation becomes:
\begin{equation*}
\tfrac{1}{2}i_0\lambda\leqslant\IntI(t_0,t_{i_0}')\leqslant 3(i_0\lambda+\beta_{i_0}')<3(i_0\lambda+\beta_{i_0})
\end{equation*}
Since $I<\IntI(t_0,t_{i_0})$ and $t_{i_0}=t_{i_0}'$, this simplifies to
\begin{equation*}
I<3(i_0\lambda+\beta_{i_0})
\end{equation*}
Which leads to
\begin{equation*}
I<3((n-1)\lambda+\beta_{n-1})
\end{equation*}
because either $i_0=n-1$ and this trivial, or $i_0<n-1$ and then use $\beta_{i_0}<\lambda$.

Notice that this result is completely independent of the run the algorithm, we just used a ``virtual''
run of the algorithm to obtain it. Consequently, in the original algorithm, the final test
will suceed and the algorithm will return $\bot$.
\end{proof}

As we see from this lemma $I$ just needs to be big enough.
Otherwise the algorithm will either return a correct value or an error $\bot$.
One can reformulate \lemref{lem:solve_pivp_taylor_correct} as:
\begin{itemize}
\item Whatever the inputs are, we have a bound on the number of steps executed by the algorithm
\item If $I$ is greater than a specified value, we know that the algorithm will return a result (and not an error, i.e $\bot$)
\item If the algorithm returns a result $x$, then this value is correct, that is $\infnorm{x-y(t)}\leqslant\varepsilon$.
\end{itemize}

Notice that the number of steps $n$ of \lemref{lem:solve_pivp_taylor_correct} is only the number of time steps
$[t_0,t_1],[t_1,t_2],\ldots,[t_{n-1},t_n]$ used by our method. But inside each subinterval $[t_i,t_{i-1}]$ we still
have to compute the solution of \eqref{eq:pivp} over this subinterval using the Taylor approximation described in \secref{sec:ode_taylor}.
We recall that whether we use bit complexity or algebraic complexity, the complexity of
finding this Taylor approximation is polynomial in the order $\omega$ of the method, in the initial condition $y_0$, in
the accuracy $k$ (actually the accuracy is $\mu=2^{-k}$) and in the description of the polynomial $p$. Using this
result and the previous theorem, we obtain the \lemref{lem:solve_pip_var_com} about the computational complexity of solving
\eqref{eq:pivp} over unbounded domains.

In order to bound the complexity, we need to introduce another quantity which is related to $\IntI$ but actually
closer to what we are really interested in: the length of the curve $y$. We recall that the length of the curve
defined by the graph of a function $f$ between $x=a$ and $x=b$ is:
\[
 \operatorname{length} = \int_a^b\sqrt{1+\left( f^{\prime}(x)\right)^2}dx
\]

In the case of the solution of \eqref{eq:pivp}, we note that the derivative of the solution $y$ is given by $p(y)$. Since
the degree of $p$ is $k$, the length of the solution has a value which has an order of magnitude similar to
the following quantity.

\begin{definition}[Pseudo-length of a PIVP]\label{def:pivp_len}
\[\LenI(t_0,t)=\int_{t_0}^{t}\sigmap{p}\max(1,\infnorm{y(u)})^kdu\]
\end{definition}

\begin{lemma}[Relationship between $\IntI$ and $\LenI$]\label{lem:ineq_I_L}
For any $t\geqslant t_0$ in the domain of definition of $y$ and $\varepsilon\leqslant\frac{1}{4k}$,
\begin{equation*}
\IntI(t_0,t)\leqslant 2k\LenI(t_0,t)
\end{equation*}
\end{lemma}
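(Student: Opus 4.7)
The goal is a pointwise comparison of the two integrands. Write
\[
\IntI(t_0,t)=\int_{t_0}^{t}k\sigmap{p}\max(1,\varepsilon+\infnorm{y(u)})^{k-1}du,
\qquad
\LenI(t_0,t)=\int_{t_0}^{t}\sigmap{p}\max(1,\infnorm{y(u)})^{k}du,
\]
and set $a=\infnorm{y(u)}$. Since $k$ and $\sigmap{p}$ are common factors, the statement reduces to proving that for every $u$,
\[
\max(1,\varepsilon+a)^{k-1}\leqslant 2\,\max(1,a)^{k}.
\]
Integrating this pointwise bound and multiplying by $k\sigmap{p}$ then yields $\IntI(t_0,t)\leqslant 2k\LenI(t_0,t)$.

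\textbf{Case analysis for the pointwise inequality.} I split on whether $a\geqslant 1$ or $a<1$. When $a\geqslant 1$, both maxima are attained on their non-constant branches: $\max(1,a)=a$ and $\max(1,\varepsilon+a)=\varepsilon+a\leqslant a\bigl(1+\tfrac{1}{4ka}\bigr)\leqslant a\bigl(1+\tfrac{1}{4k}\bigr)$. Raising to the $(k-1)$-th power gives $(\varepsilon+a)^{k-1}\leqslant a^{k-1}(1+\tfrac{1}{4k})^{k-1}$, and since $(1+\tfrac{1}{4k})^{4k}\leqslant e$ we get $(1+\tfrac{1}{4k})^{k-1}\leqslant e^{1/4}<2$. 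Combining with $a^{k-1}\leqslant a^{k}$ (because $a\geqslant 1$) produces $(\varepsilon+a)^{k-1}\leqslant 2a^{k}=2\max(1,a)^{k}$. When $a<1$, the right-hand side equals $2$, while $\varepsilon+a\leqslant 1+\tfrac{1}{4k}$, so $\max(1,\varepsilon+a)^{k-1}\leqslant(1+\tfrac{1}{4k})^{k-1}\leqslant e^{1/4}<2$ by the same estimate.

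\textbf{Conclusion.} Putting the two cases together establishes $\max(1,\varepsilon+a)^{k-1}\leqslant 2\max(1,a)^{k}$ for every admissible $a$, and integrating over $[t_0,t]$ after multiplying by $k\sigmap{p}$ gives the claim.

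\textbf{Main obstacle.} There is no serious obstacle: the whole argument hinges on the elementary fact $(1+\tfrac{1}{4k})^{k-1}\leqslant e^{1/4}<2$, which is exactly what the hypothesis $\varepsilon\leqslant\tfrac{1}{4k}$ is designed to deliver. The only mildly delicate point is keeping track of the ``$\max(1,\cdot)$'' branches so that the exponent drops from $k-1$ to $k$ using $a^{k-1}\leqslant a^{k}$ in the regime $a\geqslant 1$; this is why one loses the factor of $2$ rather than getting a cleaner constant.
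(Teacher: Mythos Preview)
Your proof is correct and follows essentially the same approach as the paper: a pointwise comparison of integrands, controlled by the elementary bound $(1+\tfrac{1}{4k})^{k-1}\leqslant e^{1/4}<2$. The only cosmetic difference is that the paper avoids your case split by first using $\max(1,\varepsilon+a)\leqslant\varepsilon+\max(1,a)$ and then factoring $(\varepsilon+\max(1,a))^k\leqslant(1+\varepsilon)^k\max(1,a)^k$, which handles both regimes at once; the key estimate $e^{1/4}<2$ is identical.
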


\begin{proof}
\begin{align*}
\IntI(t_0,t)&=\int_{t_0}^{t}k\sigmap{p}\max(1,\varepsilon+\infnorm{y(u)})^{k-1}du\\
&\leqslant k\int_{t_0}^{t}\sigmap{p}(\varepsilon+\max(1,\infnorm{y(u)}))^kdu\\
&\leqslant k(1+\varepsilon)^k\int_{t_0}^{t}\sigmap{p}\max(1,\infnorm{y(u)})^kdu\\
&\leqslant ke^{k\log(1+\frac{1}{4k})}\LenI(t_0,t)\\
&\leqslant ke^{\frac{1}{4}}\LenI(t_0,t)
\end{align*}
\end{proof}

\begin{lemma}[Complexity of $\operatorname{SolvePIVPVariable}$]\label{lem:solve_pip_var_com}
The complexity of $\operatorname{SolvePIVPVariable}$ on input $(t_0,y_0,p,t,\varepsilon,I)$ is bounded by:
\begin{equation*}
C_{arith}=\poly\left(\degp{p}^d,I,\log\LenI(t_0,t),\log\infnorm{y_0},-\log\varepsilon\right)
\end{equation*}
\begin{equation*}
C_{\Q,bit}=\poly\left(\degp{p},I,\log\LenI(t_0,t),\log\infnorm{y_0},\log\sigmap{p},-\log\varepsilon\right)^d
\end{equation*}
\end{lemma}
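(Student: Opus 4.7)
The plan is to estimate the total work as (number of loop iterations) $\times$ (cost of one call to $\operatorname{ComputeTaylor}$), making sure to bound the sizes of all intermediate quantities fed into $\operatorname{ComputeTaylor}$ in terms of the parameters appearing in the statement.

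First I would bound the number of iterations uniformly in the run. From $\varepsilon\leqslant\frac{1}{4k}$, the definition $\lambda=1-\frac{k}{1-2k\varepsilon+k}$ gives $\tfrac{1}{\lambda}=1+\frac{k}{1-2k\varepsilon}\leqslant 1+2k$, so $N=1+\frac{2I}{\lambda}=\bigO{Ik}$; by the guard on line with test ``$i\geqslant N$'', the loop runs at most $N$ times. Next I would obtain a uniform bound on the $\tilde y_i$. Lemmas \ref{lem:algo_correct} and \ref{lem:bound_n_I} show that at every step $\infnorm{\tilde y_i-y(t_i)}\leqslant\varepsilon\leqslant 1$, and integrating the PIVP gives
\[
\infnorm{y(t_i)}\leqslant\infnorm{y_0}+\int_{t_0}^{t_i}\infnorm{p(y(u))}du\leqslant\infnorm{y_0}+\LenI(t_0,t),
\]
so $\max(1,\infnorm{\tilde y_i})\leqslant 1+\infnorm{y_0}+\LenI(t_0,t)$ and $\log\max(1,\infnorm{\tilde y_i})=\bigO{\log\infnorm{y_0}+\log\LenI(t_0,t)}$ throughout the computation.

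With these two ingredients, I can evaluate the parameters passed to $\operatorname{ComputeTaylor}$ at each step: $-\log\mu=-\log\eta+\bigO{\log N}=I-\log\varepsilon+\bigO{\log(Ik)}$, the order $\omega_i=\lceil -\log_2\tfrac{\eta}{6N\max(1,\infnorm{\tilde y_i})}\rceil$ is $\bigO{I-\log\varepsilon+\log(Ik)+\log\infnorm{y_0}+\log\LenI(t_0,t)}$, and $\delta_i\leqslant\lambda/(k\sigmap{p})$ has rational size of the same order. Plugging these into \eqref{eq:tl_arith} gives an arithmetic cost per iteration of $\poly(\omega_i,d,k^d)$, and since $d\leqslant k^d$ for $k\geqslant 2$, multiplying by $N=\bigO{Ik}$ yields $C_{arith}=\poly(\degp{p}^d,I,\log\LenI(t_0,t),\log\infnorm{y_0},-\log\varepsilon)$.

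For the bit complexity I additionally need $\Rsize{\tilde y_i}$ to stay polynomial. Since $\tilde y_i$ is returned by $\operatorname{ComputeTaylor}$ with accuracy $\mu$, its encoding size is bounded by $\bigO{\log\max(1,\infnorm{\tilde y_i})-\log\mu}$, which is polynomial in the target parameters; similarly $\Rsize{\delta_i}$ and $\Rsize{\sigmap{p}}$ are controlled. Plugging into \eqref{eq:tl_bit} gives a per-step cost of $\poly((k\omega_i)^d,\Rsize{\delta_i},\Rsize{\sigmap{p}},\Rsize{\tilde y_i},-\log\mu)$; absorbing the $d$ in $(k\omega_i)^d$ and multiplying by $N$ yields the stated bound $C_{\Q,bit}=\poly(\degp{p},I,\log\LenI(t_0,t),\log\infnorm{y_0},\log\sigmap{p},-\log\varepsilon)^d$. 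The main subtlety I expect is the uniform boundedness argument for $\infnorm{\tilde y_i}$: it relies on the fact that even when the algorithm eventually returns $\bot$, the invariant $\infnorm{\tilde y_i-y(t_i)}\leqslant\varepsilon$ still holds on the truncated prefix where the argument is applied, which follows from Lemma \ref{lem:bound_n_I} being valid pointwise on each subinterval regardless of whether the final safety test succeeds. I would also note briefly that the ceiling of $\log_2$ of a rational on line labeled \ref{alg:pivp:log} is computable in polynomial bit complexity, so all control-flow operations fit within the claimed budget.
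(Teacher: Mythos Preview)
Your proposal is correct and follows essentially the same route as the paper: bound the number of iterations by $N=\bigO{Ik}$, bound $\max(1,\infnorm{\tilde y_i})$ via $\infnorm{\tilde y_i}\leqslant\varepsilon+\infnorm{y_0}+\LenI(t_0,t)$, plug these into the per-step Taylor cost \eqref{eq:tl_arith}/\eqref{eq:tl_bit}, and sum. One small imprecision: the justification you give for the invariant $\infnorm{\tilde y_i-y(t_i)}\leqslant\varepsilon$ in the $\bot$ case is not really \lemref{lem:bound_n_I} (which relates $\beta_i$ to $\IntI$) but rather the combination of \lemref{lem:algo_correct} applied to the prefix and the safety-test argument from the proof of \lemref{lem:solve_pivp_taylor_correct}, which guarantees that the algorithm aborts no later than the first step where $I<\IntI(t_0,t_{i+1})$; the paper itself simply asserts $\infnorm{\tilde y_i}\leqslant\varepsilon+\infnorm{y(t_i)}$ without spelling this out, so you are in fact being more careful than the original on this point.
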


\begin{proof}
It is clear that what makes up most of the complexity of the algorithm are the calls to
$\operatorname{ComputeTaylor}$. More precisely, let $C$ be the complexity of the algorithm, then:
\begin{equation*}
C_x=\bigO{\sum_{i=0}^{n-1}\operatorname{TL}_x(d,p,\tilde{y}_i,\omega_i,\mu_i,\delta t_i)}
\end{equation*}
where $x\in\{arith,bit\}$ and $\operatorname{TL}_{arith}$ and $\operatorname{TL}_{bit}$ 
are the arithmetic and bit complexity of computing Taylor series. In the case of the
rational bit complexity, we will discuss the complexity of the comparison as well as the
size of the rational numbers at the end of the proof.
In \secref{sec:ode_taylor} and \eqref{eq:tl_arith},\eqref{eq:tl_bit} precisely, we explained that one can show that
\begin{align*}
\operatorname{TL}_{arith}&=\softO{\omega \degp{p}^d+(d\omega)^a)}\\
\operatorname{TL}_{bit}&=\bigO{\poly((\degp{p}\omega)^d,\log\max(1,t)\sigmap{p}\max(1,\infnorm{y_0}),-\log\mu)}
\end{align*}
Recalling that $k=\degp{p}$ and that all time steps are lower than $1$, we get
\begin{align*}
\operatorname{TL}_{arith}&=\softO{k^d(d\omega)^a}\\
\operatorname{TL}_{bit}&=\bigO{\poly((k\omega)^d,\log \sigmap{p}\max(1,\infnorm{y_0}),-\log\mu)}
\end{align*}
Consequently, using \eqref{eq:choice_omega_mu_eps_eta},
\begin{align*}
C_{arith}&=\softO{\sum_{i=0}^{n-1}k^dd^a\left(\log_2\frac{6Ne^I\max(1,\infnorm{\tilde{y}_i})}{\varepsilon}\right)^a}\\
C_{\Q,bit}&=\bigO{\sum_{i=0}^{n-1}\poly\left(\begin{array}{l}
    k^d,\left(\log_2\frac{6Ne^I\max(1,\infnorm{\tilde{y}_i})}{\varepsilon}\right)^d,\\
    \log(\sigmap{p}\max(1,\infnorm{\tilde{y}_i})),-\log\frac{\varepsilon}{3Ne^I}\end{array}\right)}
\end{align*}
But we know that
\begin{align*}\label{eq:bound_tilde_y}
\infnorm{\tilde{y}_i}&\leqslant\varepsilon+\infnorm{y(t_i)}\\
&\leqslant\varepsilon+\infnorm{y_0+\int_{t_0}^{t_i}p(y(u))du}\\
&\leqslant\varepsilon+\infnorm{y_0}+\int_{t_0}^{t_i}\infnorm{p(y(u))}du\\
&\leqslant\varepsilon+\infnorm{y_0}+\int_{t_0}^{t_i}\sigmap{p}\max(1,\infnorm{y(u)})^kdu\\
&\leqslant\varepsilon+\infnorm{y_0}+\LenI(t_0,t_i)\\
\max(1,\infnorm{\tilde{y}_i})&\numberthis\leqslant1+\infnorm{y_0}+\LenI(t_0,t)\\
\end{align*}
Using that $\varepsilon\leqslant\frac{1}{4k}$, we also have:
\begin{equation}\label{eq:bound_N_I_applied}
\begin{aligned}
N&=1+\frac{2I}{\lambda}=1+2I\left(1+\frac{k}{1-2k\varepsilon}\right)\\
&\leqslant1+2I(1+2k)
\end{aligned}
\end{equation}
Which gives using that $n\leqslant N$ and that $a\leqslant3$,
\begin{align*}
C_{arith}&=\softO{\sum_{i=0}^{n-1}k^dd^a\left(\log_2\frac{\poly(e^I,k,\infnorm{y_0},\LenI(t_0,t))}{\varepsilon}\right)^a}\\
&=\poly\left(k^d,I,k,\log\left(\poly\left(e^I,k,\infnorm{y_0},\LenI(t_0,t),\frac{1}{\varepsilon}\right)\right)\right)\\
&=\poly\left(k^d,I,\log\LenI(t_0,t),\log\infnorm{y_0},-\log\varepsilon\right)\\
\end{align*}
And similarly:
\begin{align*}
C_{\Q,bit}&=\bigO{\sum_{i=0}^{n-1}\poly\left(k^d,
    \left(\log\frac{\poly(e^I,k,\infnorm{y_0},\LenI(t_0,t))}{\varepsilon}\right)^d,\log\sigmap{p}\right)}\\
    &\leqslant\poly\big(k,
    I,\log\infnorm{y_0},\log\LenI(t_0,t),\log\sigmap{p},-\log\varepsilon\big)^d
\end{align*}
In the case of rational bit complexity, we need to argue that all other operations
in the loop are polytime computable. This boils down to two facts: all rational numbers
must have polynomial size in the input, and all comparisons must be performed quickly.

The first point is nearly immediate: all assignments before the loop are made up
of addition, subtraction, multiplication and division which are all polynomial time
computable and there is a finite number of them. In the loop, we need to have a bound on
the size of $\tilde{y}$. There are two aspects to this bound: first, we need to ensure the number
of digits remains controlled, this is achieved by limiting the output precision of ComputeTaylor
to $-\log\mu$ digits. Second, we need to ensure that the numbers do not become too large,
and this comes from \eqref{eq:bound_tilde_y}.
In other words, using that $\lambda\leqslant1+2k$ and except maybe for the last iteration:
\begin{align*}
\Rsize{\tilde{y}}&\leqslant\poly(\log\max(1,\infnorm{\tilde{y}}),-\log\mu)\\
    &\leqslant\poly\left(\log\max(1,\infnorm{\tilde{y}}),-\log\frac{\varepsilon e^{-I}}{3N}\right)\\
    &\leqslant\poly\left(\log\max(1,\infnorm{\tilde{y}}),-\log\varepsilon,I,\log N\right)\\
    &\leqslant\poly\left(\log(1+\infnorm{y_0}+\LenI(t_0,t)),-\log\varepsilon,I,\log k\right)\tag{Use \eqref{eq:bound_N_I_applied}}\\
    &\leqslant\poly(\log\infnorm{y_0},\log\LenI(t_0,t),-\log\varepsilon,I,k)
\end{align*}
It follows immediately that $\beta$ and $u$ have polynomial size. If the last iteration
sets $\delta$ to $t-u$ then it follows that $\delta$ has polynomial size because
$t$ and $u$ have polynomial size. It follows that the comparison between $u$ and $t$
is polytime and we are left with showing that the comparison between $I$ and $3((i-1)\lambda+\beta)$
is polytime computable. The argument is very similar: $I$ has polynomial size
by definition, we already saw that $\beta$ has polynomial size, $i\leqslant N$ which
we already argued has polynomial size and
$\lambda\leqslant1+2k$ so $\lambda$ also has polynomial size. It follows that the test
can be performed in polytime because this is a comparison between two polynomial size
rational numbers.
\end{proof}

The algorithm of the previous section depends on an ``hint'' $I$ given as input by the user. 
This isn't certainly a desirable feature, since the algorithm is only guaranteed to terminate (with a correct answer on that case) if 
\[
I\geqslant\int_{t_0}^{t}k\sigmap{p}(1+\varepsilon+\infnorm{y(u)})^{k-1}du
\]
but the user has usually no way of estimating the right-hand side of this inequality (the problem is that it requires some knowledge
about the solution $y$ which we are trying to compute).

However we know that if the hint $I$ is large enough, then the algorithm will succeed in returning a result. Furthermore if it succeeds, the result
is correct. A very natural way of solving this problem is to repeatedly try
for larger values of the hint until the algorithm succeeds. We are guaranteed that
this will eventually happen when the hint $I$  reaches the theoretical bound given by Lemma \ref{lem:solve_pivp_taylor_correct} (although it can
stop much earlier in many cases). By choosing a very simple update strategy of the hint (double its value on each failure),
it is possible to see that this process does not cost significantly more than if we already had the hint.
In the case of the real bit complexity, things get more complicated because we have to approximate
all inputs with rational numbers to call the previous algorithm.

\begin{algorithm}
\begin{algorithmic}[1]
\Require{$t_0\in\R$ the initial time}
\Require{$y_0\in\R^d$ the initial condition}
\Require{$p\in\R^d[\R^d]$ polynomial of the PIVP}
\Require{$t\in\R$ the time step}
\Require{$\varepsilon\in\R$ the accuracy requested}
\Function{SolvePIVPEx}{$t_0,y_0,p,t,\varepsilon$}
\State $I\algassign1/2$\;
\Repeat
    \State $I\algassign2I$
    \State $x\algassign \Call{SolvePIVPVariable}{t_0,y_0,p,t,\varepsilon,I}$
\Until{$x\neq\bot$}
\State\Return{$x$}
\EndFunction
\end{algorithmic}
\caption{PIVP Solving algorithm: $\operatorname{SolvePIVPEx}$\label{alg:solve_pivp_x_boot}}
\end{algorithm}

\begin{theorem}[Complexity and correctness of $\operatorname{SolvePIVPEx}$]\label{th:solve_pivp_ex_taylor}
Let $t_0,t\in\R$, $\varepsilon>0$, and assume that $y$ satisfies \eqref{eq:pivp} over $[t_0,t]$. Let
\[x=\operatorname{SolvePIVPEx}(t_0,y_0,p,t,\varepsilon)\]
where  $\operatorname{SolvePIVPEx}$ is the Algorithm \ref{alg:solve_pivp_x_boot}.
Then $\infnorm{x-y(t)}\leqslant\varepsilon$ and the algorithm has complexity:
\begin{equation*}
C_{arith}=\poly\left(\degp{p}^d,\LenI(t_0,t),\log\infnorm{y_0},-\log\varepsilon\right)
\end{equation*}
\begin{equation*}
C_{\Q,bit}=\poly\left(\degp{p},\LenI(t_0,t),\log\infnorm{y_0},\log\sigmap{p},-\log\varepsilon\right)^d
\end{equation*}
where $\LenI(t_0,t)=\int_{t_0}^{t}\sigmap{p}\max(1,\infnorm{y(u)})^kdu$.
\end{theorem}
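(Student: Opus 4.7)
The plan is to combine \lemref{lem:solve_pivp_taylor_correct}, \lemref{lem:ineq_I_L}, and \lemref{lem:solve_pip_var_com} with the observation that \algoref{alg:solve_pivp_x_boot} invokes \algoref{alg:solve_pivp_taylor} on the geometric sequence of hints $I \in \{1, 2, 4, 8, \ldots\}$. Correctness is essentially immediate: by the first bullet of \lemref{lem:solve_pivp_taylor_correct}, every non-$\bot$ output $x$ of $\operatorname{SolvePIVPVariable}$ already satisfies $\infnorm{x - y(t)} \leqslant \varepsilon$, and $\operatorname{SolvePIVPEx}$ returns a value only when the inner call succeeds. So what remains is to show termination and to bound the final hint against $\LenI(t_0,t)$.

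For termination, I would invoke the second bullet of \lemref{lem:solve_pivp_taylor_correct}: the inner call is guaranteed to succeed as soon as $I \geqslant 6\, \IntI(t_0, t)$, and by \lemref{lem:ineq_I_L} this is ensured whenever $I \geqslant 12k\, \LenI(t_0, t)$. Since the hint doubles at each iteration, the outer loop terminates at the first power of two crossing this threshold; hence the terminating hint $I_{\text{fin}}$ satisfies $I_{\text{fin}} = \bigO{k\, \LenI(t_0, t)}$ and the number of outer iterations is $\bigO{\log(k\, \LenI(t_0, t))}$.

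For the complexity, I would sum the per-iteration bounds given by \lemref{lem:solve_pip_var_com} over $I = 2^0, 2^1, \ldots, 2^J$ with $2^J = I_{\text{fin}}$. Since each such bound is polynomial in $I$ (raised to the $d$-th power in the bit case) with all other parameters fixed, the geometric sum is dominated, up to a constant factor, by its last term: for any polynomial $P$ and constant $c$, $\sum_{j=0}^J P(2^j)^c = \bigO{P(2^J)^c}$. Substituting $I_{\text{fin}} = \bigO{k\, \LenI(t_0, t)}$ into the two bounds of \lemref{lem:solve_pip_var_com} absorbs both the explicit $I$ and the $\log \LenI(t_0, t)$ term into $\LenI(t_0, t)$ (and the factor $k$ into $\degp{p}$), reproducing exactly the arithmetic and bit expressions stated in the theorem.

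The only mildly delicate point is the geometric-summation argument in the bit case, where the per-call bound is a polynomial raised to the $d$-th power; but a polynomial to the $d$-th power is still a polynomial (of degree scaled by $d$), so the geometric summation property carries through unchanged. No estimate beyond the three cited lemmas is needed, and no additional subtlety arises from the rational inputs because the doubling schedule produces hints that are themselves powers of two, of polynomial size throughout.
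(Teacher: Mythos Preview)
Your proposal is correct and follows essentially the same route as the paper: invoke \lemref{lem:solve_pivp_taylor_correct} for correctness and termination, use \lemref{lem:ineq_I_L} to bound the final hint by $\bigO{k\,\LenI(t_0,t)}$, and then sum the per-call bounds from \lemref{lem:solve_pip_var_com}. The only cosmetic difference is that you bound the sum $\sum_{j=0}^J P(2^j)$ directly via the geometric-series observation, whereas the paper bounds it more crudely by $q$ times the last term and then absorbs $q=\bigO{\log(k\,\LenI(t_0,t))}$ into the polynomial; both lead to the same stated complexity.
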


\begin{proof}
By \lemref{lem:solve_pivp_taylor_correct},
we know that the algorithm succeeds whenever $I\geqslant6\IntI(t_0,t)$.
Thus when the $I$ in the algorithm is greater than this bound, the loop must stop. 
Recall that the value of $I$ at the $i^{th}$ iteration is $2^i$ ($i$ starts at 0). Let $q$
be the number of iterations of the algorithm: it stops with $I=2^{q-1}$. Then:
\[6\cdot 2^{q-2}\leqslant6\IntI(t_0,t)\]
Indeed, if it wasn't the case, the algorithm would have stop one iteration earlier.
Using \lemref{lem:ineq_I_L} we get:
\[2^{q-1}=\bigO{k\LenI(t_0,t)}\qquad q=\bigO{\log(k\LenI(t_0,t))}\]
Now apply \lemref{lem:solve_pip_var_com} to get that the final complexity $C$ is bounded by:
\begin{align*}
C_{arith}&=\bigO{\sum_{i=0}^{q-1} \poly(k^d,2^i,\log\LenI(t_0,t),\log\infnorm{y_0},-\log\varepsilon)}\\
&=\bigO{q\poly(k^d,2^{q-1},\log\LenI(t_0,t),\log\infnorm{y_0},-\log\varepsilon)}\\
&=\poly(k^d,\LenI(t_0,t),\log\infnorm{y_0},-\log\varepsilon)
\end{align*}
Similarly:
\begin{align*}
C_{\Q,bit}&=\bigO{\sum_{i=0}^{q-1} \poly(k,2^i,\log\LenI(t_0,t),\log\sigmap{p},\log\infnorm{y_0},-\log\varepsilon)^d}\\
&=\bigO{q\poly(k,2^{q-1},\log\LenI(t_0,t),\log\infnorm{y_0},\log\sigmap{p},-\log\varepsilon)^d}\\
&=\poly(k,\LenI(t_0,t),\log\infnorm{y_0},\log\sigmap{p},-\log\varepsilon)^d
\end{align*}
\end{proof}

\section{Conclusion and future work}\label{sec:solve:conclusion}

In this paper we presented a method which allows us to solve a polynomial
ordinary differential equation over an unbounded time with an arbitrary accuracy. Moreover
our method is guaranteed to produce a result which has a certain accuracy, where the accuracy
is also provided as an input to our method.

We analyzed the method and established rigorous bounds on the time it needs to output a result.
In this manner we were able to determine the computational complexity of solving polynomial differential equations over unbounded domains
and identified the length of the solution curve as the right parameter to measure the complexity
of the algorithm.

Our work suffers from several limitations which we plan to investigate in the future.
First, from a practical perspective, a more precise complexity bound would be useful.
For example, is the complexity dependence in the accuracy linear, or quadratic, or more ?
Second, the quantity $\LenI(t_0,t)$ is an overapproximation to the actual length of the curve
and can sometimes be very far from it. In particular, as soon as the length of the curve
grows sublinearly with time, a big gap exists between the two. This is easily seen
on examples such as $y(t)=e^{-t}$ which has bounded length, and $y(t)=\log t$ which
has extremely slow growing length.

\textbf{Acknowledgments.} Daniel Gra\c{c}a was partially supported by
\emph{Funda\c{c}\~{a}o para a Ci\^{e}ncia e a Tecnologia} and EU FEDER
POCTI/POCI via SQIG - Instituto de Telecomunica\c{c}\~{o}es through the FCT project UID/EEA/50008/2013.

The authors would like to thank the anonymous reviewers for their helpful and constructive
comments that greatly contributed to improving the final version of the paper.

\bibliography{ContComp,bournez}

\begin{thebibliography}{23}
\expandafter\ifx\csname natexlab\endcsname\relax\def\natexlab#1{#1}\fi
\expandafter\ifx\csname url\endcsname\relax
  \def\url#1{\texttt{#1}}\fi
\expandafter\ifx\csname urlprefix\endcsname\relax\def\urlprefix{URL }\fi

\bibitem[{Abad et~al.(2012)Abad, Barrio, Blesa, and Rodr\'{\i}guez}]{ABBR12}
Abad, A., Barrio, R., Blesa, F., Rodr\'{\i}guez, M., 2012. Algorithm 924:
  Tides, a {T}aylor series integrator for differential equations. ACM Trans.
  Math. Softw. 39~(1), 5:1--5:28.

\bibitem[{Aberth(1970)}]{Abe70}
Aberth, O., 1970. Computable analysis and differential equations. In: Kino, A.,
  Myhill, J., Vesley, R. (Eds.), Intuitionism and Proof Theory. Studies in
  Logic and the Foundations of Mathematics. North-Holland, pp. 47--52.

\bibitem[{Aberth(1980)}]{Abe80}
Aberth, O., 1980. Computable Analysis. McGraw-Hill.

\bibitem[{Arnold(1978)}]{Arn78}
Arnold, V.~I., 1978. Ordinary Differential Equations. MIT Press.

\bibitem[{Barrio et~al.(2011)Barrio, Rodr{\'{i}}guez, Abad, and Blesa}]{BRAB11}
Barrio, R., Rodr{\'{i}}guez, M., Abad, A., Blesa, F., 2011. Breaking the
  limits: the {T}aylor series method. Appl. Math. Comput. 217~(20), 7940--7954.

\bibitem[{Bostan et~al.(2007)Bostan, Chyzak, Ollivier, Salvy, Schost, and
  Sedoglavic}]{BostanChyzakOllivierSalvySchostSedoglavic2007}
Bostan, A., Chyzak, F., Ollivier, F., Salvy, B., Schost, {\'E}., Sedoglavic,
  A., Jan. 2007. Fast computation of power series solutions of systems of
  differential equations. In: SODA'07. pp. 1012--1021.

\bibitem[{Bournez et~al.(2012)Bournez, Gra\c{c}a, and Pouly}]{BGP12}
Bournez, O., Gra\c{c}a, D.~S., Pouly, A., 2012. On the complexity of solving
  initial value problems. In: 37h International Symposium on Symbolic and
  Algebraic Computation (ISSAC). Vol. abs/1202.4407.

\bibitem[{Brattka et~al.(2008)Brattka, Hertling, and Weihrauch}]{BHW08}
Brattka, V., Hertling, P., Weihrauch, K., 2008. A tutorial on computable
  analysis. In: Cooper, S.~B., , L{\"{o}}we, B., Sorbi, A. (Eds.), New
  Computational Paradigms: Changing Conceptions of What is Computable.
  Springer, pp. 425--491.

\bibitem[{Collins and Gra{\c{c}}a(2009)}]{CG09}
Collins, P., Gra{\c{c}}a, D.~S., 2009. Effective computability of solutions of
  differential inclusions --- the ten thousand monkeys approach. Journal of
  Universal Computer Science 15~(6), 1162--1185.

\bibitem[{Corless(2002)}]{Corless02}
Corless, R.~M., 2002. A new view of the computational complexity of {IVP} for
  {ODE}. Numerical Algorithms 31~(1-4), 115--124.
\newline\urlprefix\url{http://dx.doi.org/10.1023/A:1021108323034}

\bibitem[{Corliss and Chang(1982)}]{CC82}
Corliss, G., Chang, Y.~F., 1982. Solving ordinary differential equations using
  {T}aylor series. ACM Trans. Math. Softw. 8~(2), 114--144.

\bibitem[{Gra{\c{c}}a et~al.(2009)Gra{\c{c}}a, Zhong, and Buescu}]{GZB07}
Gra{\c{c}}a, D., Zhong, N., Buescu, J., 2009. Computability, noncomputability
  and undecidability of maximal intervals of {IVP}s. Trans. Amer. Math. Soc.
  361~(6), 2913--2927.

\bibitem[{Gra{\c{c}}a et~al.(2007)Gra{\c{c}}a, Buescu, and Campagnolo}]{GBC07}
Gra{\c{c}}a, D.~S., Buescu, J., Campagnolo, M.~L., 2007. Boundedness of the
  domain of definition is undecidable for polynomial {ODE}s. In: Dillhage, R.,
  Grubba, T., Sorbi, A., Weihrauch, K., Zhong, N. (Eds.), 4th International
  Conference on Computability and Complexity in Analysis (CCA 2007). Vol. 202
  of Electron. Notes Theor. Comput. Sci. Elsevier, pp. 49--57.

\bibitem[{Gra{\c{c}}a et~al.(2008)Gra{\c{c}}a, Campagnolo, and Buescu}]{GCB08}
Gra{\c{c}}a, D.~S., Campagnolo, M.~L., Buescu, J., 2008. Computability with
  polynomial differential equations. Adv. Appl. Math. 40~(3), 330--349.

\bibitem[{Ilie et~al.(2008)Ilie, S{\"{o}}derlind, and Corless}]{IlieSC08}
Ilie, S., S{\"{o}}derlind, G., Corless, R.~M., 2008. Adaptivity and
  computational complexity in the numerical solution of odes. J. Complexity
  24~(3), 341--361.
\newline\urlprefix\url{http://dx.doi.org/10.1016/j.jco.2007.11.004}

\bibitem[{Jorba and Zou(2005)}]{JZ05}
Jorba, {\`{A}}., Zou, M., 2005. A software package for the numerical
  integration of odes by means of high-order {T}aylor methods. Experimental
  Mathematics 14~(1), 99--117.

\bibitem[{Kawamura(2010)}]{Kaw10}
Kawamura, A., 2010. Lipschitz continuous ordinary differential equations are
  polynomial-space complete. Computational Complexity 19~(2), 305--332.

\bibitem[{Ko(1991)}]{Ko91}
Ko, K.-I., 1991. Complexity Theory of Real Functions. Progress in Theoretical
  Computer Science. Birkha{\"u}ser, Boston.

\bibitem[{M{\"u}ller and Moiske(1993)}]{MM93}
M{\"u}ller, N., Moiske, B., 1993. Solving initial value problems in polynomial
  time. In: Proc. 22 JAIIO - PANEL '93, Part 2. pp. 283--293.

\bibitem[{Ruohonen(1996)}]{Ruo96}
Ruohonen, K., 1996. An effective {C}auchy-{P}eano existence theorem for unique
  solutions. Internat. J. Found. Comput. Sci. 7~(2), 151--160.

\bibitem[{Smith(2006)}]{Smi1}
Smith, W.~D., 2006. Church's thesis meets the {N}-body problem. Applied
  Mathematics and Computation 178~(1), 154--183.

\bibitem[{Warne et~al.(2006)Warne, Warne, Sochacki, Parker, and
  Carothers}]{WWSPC06}
Warne, P.~G., Warne, D.~P., Sochacki, J.~S., Parker, G.~E., Carothers, D.~C.,
  Dec. 2006. Explicit a-priori error bounds and adaptive error control for
  approximation of nonlinear initial value differential systems. Comput. Math.
  Appl. 52~(12), 1695--1710.
\newline\urlprefix\url{http://dx.doi.org/10.1016/j.camwa.2005.12.004}

\bibitem[{Werschulz(1979)}]{Wers79}
Werschulz, A., 1979. Computational complexity of one-step methods for a scalar
  autonomous differential equation. Computing 23~(4), 345--355.
\newline\urlprefix\url{http://dx.doi.org/10.1007/BF02254863}

\end{thebibliography}
\bibliographystyle{elsarticle-harv}

\end{document}